\newcommand{\define}[4][ignore]{%
  \ifstrequal{#1}{ignore}{}{
  \@namedef{thmtitle@#3}{#1}}%
  \@namedef{thm@#3}{#4}%
  \@namedef{thmtypen@#3}{lemma}%
  \newtheorem{thmtype@#3}[theorem]{#2}%
  \newtheorem*{thmtypealt@#3}{#2~\ref{#3}}%
}
\newcommand{\state}[1]{%
  \@namedef{curthm}{#1}
  \@ifundefined{thmtitle@#1}{
  \begin{thmtype@#1}
    }{
  \begin{thmtype@#1}[\@nameuse{thmtitle@#1}]
  }
    \label{#1}
    \@nameuse{thm@#1}
  \end{thmtype@#1}
  \@ifundefined{thmdone@#1}{
  \@namedef{thmdone@#1}{stated}%
  }{}
}
\newcommand{\restate}[1]{%
  \@namedef{curthm}{#1}
  \@ifundefined{thmtitle@#1}{
    \begin{thmtypealt@#1}
    }{
  \begin{thmtypealt@#1}[\@nameuse{thmtitle@#1}]
  }
    \@nameuse{thm@#1}
  \end{thmtypealt@#1}
  \@ifundefined{thmdone@#1}{
  \@namedef{thmdone@#1}{stated}%
  }{}
}
\newcommand{\thmlabel}[1]{
  \@ifundefined{thmdone@\@nameuse{curthm}}{\label{#1}
    }{\tag*{\eqref{#1}}}
}
\newtheorem{theorem}{Theorem}
\newtheorem{conjecture}[theorem]{Conjecture}
\newtheorem{definition}[theorem]{Definition}
\newtheorem{lemma}[theorem]{Lemma}
\newtheorem{proposition}[theorem]{Proposition}
\newtheorem{corollary}[theorem]{Corollary}
\newtheorem{claim}[theorem]{Claim}
\newtheorem{fact}[theorem]{Fact}
\newtheorem{remk}[theorem]{Remark}
\newtheorem{exmp}[theorem]{Example}
\newenvironment{example}{\begin{exmp}
\begin{normalfont}}{\end{normalfont}
\end{exmp}}
\newenvironment{remark}{\begin{remk}
\begin{normalfont}}{\end{normalfont}
\end{remk}}
\def\FullBox{\hbox{\vrule width 8pt height 8pt depth 0pt}}
\def\qed{\ifmmode\qquad\FullBox\else{\unskip\nobreak\hfil
\penalty50\hskip1em\null\nobreak\hfil\FullBox
\parfillskip=0pt\finalhyphendemerits=0\endgraf}\fi}
\def\qedsketch{\ifmmode\Box\else{\unskip\nobreak\hfil
\penalty50\hskip1em\null\nobreak\hfil$\Box$
\parfillskip=0pt\finalhyphendemerits=0\endgraf}\fi}
\newcommand{\eqdef}{\mathbin{\stackrel{\rm def}{=}}}
\newcommand{\R}{{\mathbb R}}
\newcommand{\poly}{{\mathrm{poly}}}
\newcommand{\eps}{\varepsilon}
\newcommand{\inner}[1]{\left\langle#1\right\rangle}
\newcommand{\norm}[1]{\|#1\|}
\newcommand{\abs}[1]{|#1|}
\newcommand{\Ot}{\widetilde{O}}
\DeclareMathOperator*{\E}{{\bf E}}
\DeclareMathOperator*{\V}{{\bf V}}
\DeclareMathOperator*{\argmin}{arg\,min}
\DeclareMathOperator{\nnz}{nnz}
\DeclareMathOperator{\hard}{hard}
\DeclareMathOperator{\rank}{rank}
\DeclareMathOperator{\OPT}{OPT}
\DeclareMathOperator{\supp}{supp}
\renewcommand{\hat}{\widehat}
\newcommand{\wt}{\widetilde}
\newcommand{\ov}{\overline}
\newcommand*{\RN}[1]{\expandafter\@slowromancap\romannumeral #1@}
\title{Fast Regression with an $\ell_\infty$ Guarantee\thanks{A preliminary version of this paper appears in Proceedings of the 44th International Colloquium on Automata, Languages, and Programming  (ICALP 2017).}}
\author{
  Eric Price\\
  \texttt{zhaos@utexas.edu}\\
  UT-Austin
  \and
  Zhao Song\\
  \texttt{zhaos@utexas.edu}\\
  UT-Austin
  \and
  David P. Woodruff\\
  \texttt{dpwoodru@us.ibm.com}\\
  IBM Almaden
}
\begin{document}

\begin{titlepage}
  \maketitle

\begin{abstract}
  Sketching has emerged as a powerful technique for speeding up
  problems in numerical linear algebra, such as regression.
  In the overconstrained
  regression problem, one is given
  an $n \times d$ matrix $A$, with $n \gg d$, as well as an $n \times
  1$ vector $b$, and one wants to find a vector $\hat{x}$ so as to
  minimize the residual error $\|Ax-b\|_2$. Using the sketch and solve
  paradigm, one first computes $S \cdot A$ and $S \cdot b$ for a
  randomly chosen matrix $S$, then outputs
  $x' = (SA)^{\dagger} Sb$ so as to minimize $\norm{SAx' - Sb}_2$.

  The sketch-and-solve paradigm gives a bound on
  $\norm{x'-x^*}_2$ when $A$ is well-conditioned.  Our main
  result is that, when $S$ is the subsampled randomized
  Fourier/Hadamard transform, the error $x' - x^*$ behaves as if
  it lies in a ``random'' direction within this bound: for any fixed
  direction $a\in \R^d$, we have with $1 - d^{-c}$ probability that
  \begin{align}\label{eq:abstract}
    \inner{a, x'-x^*} \lesssim \frac{\norm{a}_2\|x'-x^*\|_2}{d^{\frac{1}{2}-\gamma}},
  \end{align}
  where $c, \gamma > 0$ are arbitrary constants.
  This implies $\|x'-x^*\|_{\infty}$
  is a factor $d^{\frac{1}{2}-\gamma}$
  smaller than $\|x'-x^*\|_2$.  It also gives 
  a better bound on the generalization of $x'$ to new examples:
  if rows of $A$ correspond to examples and columns to features, then
  our result gives a better bound for the error introduced by
  sketch-and-solve when classifying fresh examples.
  We show that not all oblivious subspace embeddings
  $S$ satisfy these properties. In particular, we give counterexamples
  showing that matrices based on
  Count-Sketch or leverage score sampling do not satisfy these properties. 

  We also provide lower bounds, both on how small $\|x'-x^*\|_2$ can
  be, 
  and for our new guarantee~\eqref{eq:abstract},
  showing that the subsampled randomized
  Fourier/Hadamard transform is nearly optimal. Our lower bound 
  on $\|x'-x^*\|_2$ shows
  that there is an $O(1/\eps)$ separation in the dimension
  of the optimal oblivious subspace embedding required for outputting an $x'$
  for which $\|x'-x^*\|_2 \leq \epsilon \|Ax^*-b\|_2 \cdot \|A^{\dagger}\|_2$,
  compared to the dimension of the optimal oblivious subspace embedding
  required for outputting an $x'$ for which
  $\|Ax'-b\|_2 \leq (1+\epsilon)\|Ax^*-b\|_2$, that is, the former
  problem requires dimension $\Omega(d/\epsilon^2)$ while the latter
  problem can be solved with dimension $O(d/\epsilon)$. This explains
  the reason known upper bounds on the
  dimensions of these two variants of regression
  have differed in prior work. 

\end{abstract}

  \thispagestyle{empty}
\end{titlepage}

\section{Introduction}
Oblivious subspace embeddings (OSEs) were introduced by
Sarlos~\cite{sar06} to solve linear algebra problems more quickly than
traditional methods.  An OSE is a distribution of matrices $S \in
\R^{m \times n}$ with $m \ll n$ such that, for any $d$-dimensional
subspace $U \subset \R^n$, with ``high'' probability $S$ preserves the
norm of every vector in the subspace.  OSEs are a generalization of
the classic Johnson-Lindenstrauss lemma from vectors to subspaces.
Formally, we require that with probability $1-\delta$, 
\[
\norm{Sx}_2 = (1 \pm \eps) \norm{x}_2
\]
simultaneously for all $x \in U$, that is,
$(1-\eps) \norm{x}_2 \leq \norm{Sx}_2 \leq (1+\eps) \norm{x}_2$. 

A major application of OSEs is to regression.  The regression problem
is, given $b \in \R^n$ and $A \in \R^{n \times d}$ for $n \geq d$, to
solve for
\begin{align}
  x^* = \argmin_{x \in \mathbb{R}^d} \norm{Ax-b}_2\label{eq:regression}
\end{align}
Because $A$ is a ``tall'' matrix with more rows than columns, the
system is overdetermined and there is likely no solution to $Ax = b$,
but regression will find the closest point to $b$ in the space spanned
by $A$.  The classic answer to regression is to use the Moore-Penrose
pseudoinverse: $x^* = A^\dagger b$ where
\[
A^\dagger = (A^\top A)^{-1}A^\top
\]
is the ``pseudoinverse'' of $A$ (assuming $A$ has full column rank,
which we will typically do for simplicity).  This classic solution
takes $O(nd^{\omega - 1} + d^{\omega})$ time, where $\omega < 2.373$ is
the matrix multiplication constant~\cite{cw90,w12,g4a}: $nd^{\omega -
  1}$ time to compute $A^\top A$ and $d^{\omega}$ time to compute the inverse.

OSEs speed up the process by replacing~\eqref{eq:regression} with
\[
x' = \argmin_{x\in \R^d} \norm{SAx-Sb}_2
\]
for an OSE $S$ on $d+1$-dimensional spaces.  This replaces the $n
\times d$ regression problem with an $m \times d$ problem, which can be
solved more quickly since $m \ll n$.  Because $Ax - b$ lies in the
$d+1$-dimensional space spanned by $b$ and the columns of $A$, with
high probability $S$ preserves the norm of $SAx - Sb$ to $1 \pm \eps$
for all $x$.  Thus,
\[
\norm{Ax'-b}_2 \leq \frac{1+\eps}{1-\eps} \norm{Ax^*-b}_2.
\]
That is, $S$ produces a solution $x'$ which preserves the {\it cost}
of the regression problem.  The running time for this method depends
on (1) the reduced dimension $m$ and (2) the time it takes to multiply
$S$ by $A$.  We can compute these for ``standard'' OSE types:

\begin{itemize}
\item If $S$ has i.i.d. Gaussian entries, then $m = O(d/\eps^2)$ is
  sufficient (and in fact, $m \geq d/\epsilon^2$ is required \cite{nn14}).
  However, computing $SA$ takes $O(mnd) =
  O(nd^2/\eps^2)$ time, which is worse than solving the original
  regression problem (one can speed this up using fast matrix multiplication,
  though it is still worse than solving the original problem). 
\item If $S$ is a subsampled randomized Hadamard transform (SRHT)
  matrix with random sign
  flips~(see Theorem 2.4 in \cite{woo14} for a survey, and also see 
\cite{CNW16} which gives a recent improvement)
then $m$
  increases to $\Ot(d/\eps^2 \cdot \log n)$, where $\Ot(f) = f
  \poly(\log(f))$.  But now, we can compute $SA$ using the fast
  Hadamard transform in $O(nd\log n)$ time.  This makes the overall
  regression problem take $O(nd\log n + d^\omega/\eps^2)$ time.
\item If $S$ is a random sparse matrix with random signs (the
  ``Count-Sketch'' matrix), then $m = d^{1 + \gamma}/\eps^2$ suffices
  for $\gamma > 0$ a decreasing function of the 
  sparsity~\cite{CW13,MM13,NN13,BDN15,c16}.
  (The definition of a Count-Sketch matrix
  is, for any $s\geq 1$, $S_{i,j}\in \{0, -1/\sqrt{s}, 1/\sqrt{s} \}$, 
  $\forall i\in [m], j\in  [n]$ and the column sparsity of matrix $S$ is $s$.
  Independently in
  each column $s$ positions are chosen uniformly at random without
  replacement, and each chosen position is set to $-1/\sqrt{s}$ with
  probability $1/2$, and $+1/\sqrt{s}$ with probability $1/2$.)
  Sparse OSEs can benefit from the sparsity of $A$, allowing for a running time of
  $\Ot(\nnz(A)) + \Ot(d^\omega/\eps^2)$, 
  where $\nnz(A)$ denotes the number of non-zeros in $A$.
\end{itemize}
When $n$ is large, the latter two algorithms are substantially faster
than the na\"ive $nd^{\omega-1}$ method.



\subsection{Our Contributions}
Despite the success of using subspace embeddings to speed up regression, 
often what practitioners are interested is not in preserving the cost
of the regression problem, but rather in the {\it generalization} or 
{\it prediction} error 
provided by the vector $x'$. Ideally, we would like for any future (unseen) example
$a \in \mathbb{R}^d$, that 
$\langle a, x' \rangle \approx \langle a, x^* \rangle$ with
high probability.

Ultimately one may want to use $x'$ to do classification,
such as regularized least squares classification (RLSC) \cite{ryp03}, 
which has been found in
cases to do as well as support vector machines but is much simpler \cite{zp04}.
In this application, given a training 
set of examples with multiple (non-binary)  
labels identified with the rows of an $n \times d$ 
matrix $A$, one creates an $n \times r$ matrix $B$, each column indicating the presence
or absence of one of the $r$ possible labels in each example. One then solves 
the multiple response regression problem $\min_X \|AX-B\|_F$, and uses $X$ to classify
future examples. A commonly used method is for a future example $a$, to compute
$\langle a, x_1 \rangle, \ldots, \langle a, x_r \rangle$, where $x_1, \ldots, x_r$
are the columns of $X$. One then chooses the label $i$ for which $\langle a, x_i \rangle$ is maximum. 

For this to work, we would like the inner products $\langle a,
x'_1 \rangle, \ldots, \langle a, x'_r \rangle$ to be close to $\langle
a, x_1^* \rangle, \ldots$, $\langle a, x_r^* \rangle$, where $X'$ is the
solution to $\min_X \|SAX-SB\|_F$ and $X^*$ is the solution to $\min_X
\|AX-B\|_F$. For any $O(1)$-accurate OSE on $d+r$ dimensional spaces \cite{sar06}, which also satisfies so-called 
approximate matrix multiplication with error $\eps' = \eps/\sqrt(d+r)$, we get that
\begin{align}\label{eq:ell2}
  \norm{x' - x^*}_2 \leq O(\eps) \cdot \norm{Ax^* - b}_2 \cdot \norm{A^\dagger}_2
\end{align}
where $\norm{A^\dagger}$ is the spectral norm of $A^\dagger$, which equals the
reciprocal of the smallest singular value of $A$. To obtain a generalization
error bound for an unseen example $a$, one has
\begin{eqnarray}\label{eqn:old}
|\langle a, x^* \rangle - \langle a, x' \rangle | = |\langle a, x^*-x' \rangle| \leq \|x^*-x'\|_2 \|a\|_2 = O(\eps) \|a\|_2 \norm{Ax^*-b}_2 \norm{A^\dagger}_2,
\end{eqnarray}
which could be tight if given only the guarantee in \eqref{eq:ell2}.
However, if the difference vector $x' - x^*$ were distributed in a uniformly
random direction subject to~\eqref{eq:ell2}, then one would expect an
$\Ot(\sqrt{d})$ factor improvement in the bound.  This is what our main
theorem shows:
\begin{theorem}[Main Theorem, informal]\label{thm:main}
  Suppose $n \leq \poly(d)$ and matrix $A\in \R^{n\times d}$ and vector $b\in \mathbb{R}^{n}$ are given.  Let $S\in \R^{m\times n}$ be a subsampled randomized
  Hadamard transform matrix with $m = d^{1+\gamma}/\eps^2$ rows for an
  arbitrarily small constant $\gamma > 0$. For $x' = \argmin_{x\in \R^d}
  \norm{SAx-Sb}_2$ and $x^* = \argmin_{x\in \R^d} \norm{Ax-b}_2$, and any fixed
  $a \in \R^d$,
\begin{eqnarray}\label{eq:general}
|\langle a, x^* \rangle - \langle a, x' \rangle |
\leq \frac{\eps}{\sqrt{d}} \|a\|_2 \norm{Ax^*-b}_2 \norm{A^\dagger}_2.
\end{eqnarray}
with probability $1-1/d^C$ for an arbitrarily large constant $C > 0$.
This implies that
\begin{eqnarray}\label{eq:ellinf}
\|  x^* - x' \|_{\infty}
\leq \frac{\eps}{\sqrt{d}}  \norm{Ax^*-b}_2 \norm{A^\dagger}_2.
\end{eqnarray}
with $1 - 1/d^{C-1}$ probability.

If $n > \poly(d)$, then by first composing $S$ with a Count-Sketch OSE with $\poly(d)$
rows, one can achieve the same guarantee. 
\end{theorem}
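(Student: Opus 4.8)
\emph{Reduction and exact formula.} The plan is to first normalize: writing the thin SVD $A = U\Sigma V^\top$ and substituting $x = V\Sigma^{-1}w$ reduces the claim to the case $A=U$ with orthonormal columns and $\norm{A^\dagger}_2=1$, with the fixed direction $a$ replaced by $\Sigma^{-1}V^\top a$, whose norm is at most $\norm{A^\dagger}_2\norm{a}_2$, and with the residual $e := b - UU^\top b$ (which satisfies $U^\top e = 0$) unchanged. By the SRHT subspace–embedding guarantee applied to the span of the columns of $U$ together with $e$, and since $n\le\poly(d)$, with probability $1-d^{-C'}$ (any constant $C'$) the map $S$ is an $\eps_0$–embedding of that $(d{+}1)$–dimensional space with $\eps_0=\Ot(\eps/d^{\gamma/2})$; we condition on this event. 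Then $SU$ has full column rank, and a short computation gives $x'-x^* = (SU)^\dagger Se$ and hence
\begin{align*}
\inner{a,x'-x^*} \;=\; a^\top M^{-1}z, \qquad M := U^\top S^\top S U,\quad z := U^\top(S^\top S-I)e ,
\end{align*}
using $U^\top e=0$. The same event yields $\norm{M-I}_2\le\eps_0$, $\norm{M^{-1}-I}_2\le 2\eps_0$, and $\norm{z}_2\le 2\eps_0\norm{e}_2$.

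\emph{Splitting off the main term.} Next we write $\inner{a,x'-x^*} = T_1 + T_2$ with $T_1 := a^\top z = (Ua)^\top(S^\top S - I)e$ and $T_2 := a^\top(M^{-1}-I)z$. In $T_1$ both vectors are \emph{fixed} and orthogonal. Writing $S=\tfrac1{\sqrt m}PHD$ and conditioning on the high–probability event that the randomization $HD$ flattens them, i.e.\ $\norm{HD(Ua)}_\infty \lesssim \sqrt{\log(n/\delta)}\,\norm{a}_2$ and similarly for $e$, one checks $T_1 = \tfrac1m\sum_{i\in T}(HDUa)_i(HDe)_i$, where $T$ is the set of sampled coordinates; this is a mean–zero sum of $m$ bounded increments sampled without replacement (mean zero because $\langle HDUa,HDe\rangle = n\langle Ua,e\rangle = 0$), so a Bernstein bound gives $\abs{T_1}\lesssim \tfrac{\polylog(n)}{\sqrt m}\norm{a}_2\norm{e}_2 \le \tfrac{\eps}{\sqrt d}\norm{a}_2\norm{e}_2$ with probability $1-\delta$, using $m=d^{1+\gamma}/\eps^2$ and $\polylog(n)\le d^{\gamma/2}$ for large $d$. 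Conceptually, $\inner{a,x'-x^*}$ is, to leading order, a fixed linear functional of the projection of $Se$ onto the $d$–dimensional sketched subspace, which behaves like an isotropic vector in $\R^d$; this is the source of the $\sqrt d$ gain over the naive bound $\norm{a}_2\norm{x'-x^*}_2$.

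\emph{The correction term $T_2$ — the crux.} The crude estimate $\abs{T_2}\le\norm{M^{-1}-I}_2\norm{a}_2\norm{z}_2 = O(\eps_0^2)\norm{a}_2\norm{e}_2$ is \emph{not} enough when $\gamma<\tfrac12$, since $\eps_0^2=\Ot(\eps^2/d^{\gamma})$ need not be at most $\eps/\sqrt d$; what is needed is that $(M^{-1}-I)a$ and $z$ are nearly orthogonal, and this requires more than oblivious–subspace–embedding properties. The plan here is to expand $M^{-1}-I=\sum_{k\ge1}(I-M)^k$, so that $T_2=\sum_{k\ge1}a^\top(I-M)^k z$, and substitute $I-M=-U^\top(S^\top S-I)U$ and $z=U^\top(S^\top S-I)e$; the $k$–th summand then becomes a degree–$(k{+}1)$ multilinear form in $S^\top S-I$ with interspersed copies of $P_U=UU^\top$, and, after conditioning on the sign matrix $D$ (so that $HDU$ and $HDe$ are fixed and flat), a degree–$(k{+}1)$ polynomial in the $\pm$/subsampling variables. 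Each such form is controlled by a Hanson–Wright–type inequality for sampling–without–replacement quadratic/multilinear forms, crucially \emph{using the variances $\Var\bigl(\tfrac nm\mathbf{1}[i\in T]-1\bigr)\approx n/m$ rather than the almost–sure bounds}, together with $\mathrm{tr}(P_U)=d$ and the flatness of $HDU,HDe$; the $k=1$ term dominates and contributes $\Ot(\eps^2/d^{1/2+\gamma})\norm{a}_2\norm{e}_2\le\tfrac{\eps}{\sqrt d}\norm{a}_2\norm{e}_2$, the higher terms being geometrically smaller. (This is the SRHT analogue of the fact that for a Gaussian sketch, conditioning on $SU$ leaves the relevant component of $Se$ a fresh isotropic Gaussian in $\R^d$; it is precisely this step that fails for Count–Sketch, whose post–randomization vectors are not flat.) Combining the $T_1$ and $T_2$ bounds, undoing the orthonormalization to restore $\norm{A^\dagger}_2$, and union–bounding over the $O(1)$ good events gives \eqref{eq:general}; applying it to $a=e_1,\dots,e_d$ with a union bound gives \eqref{eq:ellinf} with one fewer power of $d$ in the probability. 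For $n>\poly(d)$, one first hits $A,b$ with a Count–Sketch OSE of $\poly(d)$ rows — which preserves all relevant norms and the direction up to a $1\pm\poly(1/d)$ factor — and runs the above with ambient dimension now $\poly(d)$.

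\emph{Main obstacle.} The hard part is $T_2$: producing a high–probability bound on $\inner{(M^{-1}-I)a,z}$ that is a genuine $\sqrt d$ factor below $\norm{(M^{-1}-I)a}_2\norm{z}_2$. This amounts to quantifying the near–orthogonality of two $S$–dependent vectors, which is where the Hadamard structure (flatness after $\pm1$ randomization, hence exponential concentration of the multilinear forms above) is essential, and where subspace–embedding properties alone — and in particular Count–Sketch and leverage–score sketches — do not suffice.
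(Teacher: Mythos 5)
Your reduction and decomposition are exactly the paper's: after normalizing to $A=U$ you get $x'-x^*=(U^\top S^\top SU)^{-1}U^\top S^\top Se$, expand $(U^\top S^\top SU)^{-1}=\sum_{k\ge 0}T^k$ with $T=I-U^\top S^\top SU$ (the paper's Claim~\ref{claim:series}), bound the $k=0$ term by approximate matrix multiplication, and treat the $k\ge 1$ terms as multilinear forms of the type $a^\top S^\top S(UU^\top S^\top S)^k b$. You also correctly identify that the crude bound $\|T\|_2^k\cdot\|U^\top S^\top Se\|_2$ fails for small $k$ when $\gamma<1/2$, which is precisely why the paper needs its Lemma~\ref{l:bound_T_k}, and that one switches to the crude geometric bound only for $k\gtrsim 1/\gamma$.

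The gap is that the entire technical content of the theorem --- the high-probability bound on these degree-$(k+1)$ forms --- is delegated to a ``Hanson--Wright-type inequality for sampling-without-replacement multilinear forms,'' which is not an off-the-shelf result at any degree larger than two, let alone with the tail $1-1/d^C$ and the variance proxy you need to land at $m=d^{1+\gamma}/\eps^2$. Your plan conditions on the sign matrix $D$ first (to flatten $HDU$ and $HDe$) and asks the subsampling variables $\xi_i=\tfrac{n}{m}\mathbf{1}[i\in T]-1$ to supply concentration of a degree-$(k+1)$ polynomial; proving such a polynomial concentration inequality for negatively associated sampling indicators, uniformly in $k$ up to $\Theta(1/\gamma)$ and with only polylogarithmic losses, is a substantial open subproject of your proof, not a citation. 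The paper goes the other way around: it keeps the Rademacher signs as the source of concentration, expands the form over index tuples, uses only two deterministic-after-conditioning facts about $S$ (unit columns, and $|\langle S_i,S_j\rangle|\lesssim\sqrt{\log n}/\sqrt{r}$ for $i\ne j$ with probability $1-1/\poly(n)$), applies Khintchine's inequality recursively over the $2k$ inner indices, and separately bounds $\E_S[\|Z\|_F^2]$ by a peeling argument (Section~4 and Appendix~\ref{sec:Z}). If you want to complete your route you must either prove the multilinear sampling inequality you invoke or replace it by the paper's recursive-Khintchine argument; everything else in your outline (the $T_1$ Bernstein bound, the union bound over $e_1,\dots,e_d$ for \eqref{eq:ellinf}, and the Count-Sketch preconditioning for $n>\poly(d)$, cf.\ Theorem~\ref{thm:combine}) is fine.
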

(Here $\gamma$ is a constant going to zero as $n$ increases;
see Theorem \ref{thm:eps} for a formal statement of Theorem \ref{thm:main}.)

Notice that Theorem \ref{thm:main} is considerably stronger than
that of (\ref{eqn:old}) provided by existing guarantees. Indeed, in order
to achieve the guarantee (\ref{eq:ellinf}) 
in Theorem \ref{thm:main}, one would need
to set $\eps' = \eps/\sqrt{d}$ in existing OSEs, resulting in $\Omega(d^2/\epsilon^2)$
rows. In contrast, we achieve only $d^{1+\gamma}/\epsilon^2$ rows. 
We can improve the bound in Theorem \ref{thm:main} to $m = d/\eps^2$ 
if $S$ is a matrix
of i.i.d. Gaussians; however, as noted, computing $S \cdot A$ is slower in
this case. 


Note that Theorem \ref{thm:main} also {\it makes no distributional
  assumptions} on the data, and thus the data could be heavy-tailed or
even adversarially corrupted. This implies that our bound is still useful
when the rows of $A$ are not sampled independently from a distribution
with bounded variance. 

The $\ell_\infty$ bound~\eqref{eq:ellinf} of Theorem~\ref{thm:main} is
achieved by applying~\eqref{eq:general} to the standard basis vectors
$a = e_i$ for each $i \in [d]$ and applying a union bound.  This
$\ell_{\infty}$ guarantee often has a more natural interpretation than
the $\ell_2$ guarantee---if we think of the regression as attributing
the observable as a sum of various factors,~\eqref{eq:ellinf} says
that the contribution of each factor is estimated well.  One may also
see our contribution as giving a way for estimating the pseudoinverse
$A^\dagger$ \emph{entrywise}. Namely, we get that $ (SA)^\dagger S
\approx A^\dagger $ in the sense that each entry is within additive $O(\eps
\sqrt{\frac{\log d}{d}} \norm{A^\dagger}_2)$.  There is a lot of work
on computing entries of inverses of a matrix, see, e.g.,
\cite{ADLRRU12,LiAKD08}.

Another benefit of the $\ell_\infty$ guarantee is when the regression
vector $x^*$ is expected to be
$k$-\emph{sparse}~(e.g. \cite{leekasso}).  In such cases, thresholding
to the top $k$ entries will yield an $\ell_2$ guarantee a factor
$\sqrt{\frac{k}{d}}$ better than~\eqref{eq:ell2}.

%
One could ask if Theorem \ref{thm:main} also holds for sparse OSEs, such
as the Count-Sketch. Surprisingly, we show that one cannot achieve the 
generalization error guarantee in Theorem \ref{thm:main} 
with high probability, say, $1-1/d$, 
using such embeddings, despite the fact that such
embeddings do approximate the cost of the regression problem up to a 
$1+\epsilon$ factor with high probability. This shows that the generalization
error guarantee is achieved by some subspace embeddings but not all. 

\begin{theorem}[Not all subspace embeddings give the $\ell_{\infty}$ guarantee; informal version of Theorem \ref{thm:count_sketch_not_infty}] 
  The Count-Sketch matrix with $d^{1.5}$ rows and sparsity
  $d^{.25}$---which is an OSE with exponentially small failure probability---with
  constant probability will have a result $x'$ that does not satisfy the
  $\ell_{\infty}$ guarantee (\ref{eq:ellinf}).
\label{thm:negative}
\end{theorem}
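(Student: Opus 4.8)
The plan is to exhibit a single fixed hard instance $(A,b)$ on which Count-Sketch fails, and to show that this failure happens with constant probability over the draw of $S$; the intuition is that a Count-Sketch ``collision'' between two columns produces an error concentrated in a single coordinate, whereas the guarantee \eqref{eq:ellinf} only tolerates error spread evenly over all $d$ coordinates. I would take $n=d+1$, let $A\in\R^{(d+1)\times d}$ be the matrix whose top $d\times d$ block is $I_d$ and whose last row is $0$ (so its columns are $e_1,\dots,e_d\in\R^{d+1}$), and let $b=\beta\,e_{d+1}$ for an arbitrary $\beta>0$. Then $x^*=0$, $\|Ax^*-b\|_2=\beta$, and $\|A^\dagger\|_2=1$, so \eqref{eq:ellinf} would assert $\|x'-x^*\|_\infty\le\tfrac{\eps}{\sqrt d}\beta$. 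With this choice $SA=[S^{(1)}\mid\cdots\mid S^{(d)}]$ is the first-$d$-columns submatrix of $S$ and $Sb=\beta S^{(d+1)}$, so — on the event that $SA$ has full column rank, guaranteed below —
\[
x'-x^* \;=\; (SA)^\dagger Sb \;=\; \beta\,G^{-1}u, \qquad G:=(SA)^\top(SA)\in\R^{d\times d},\quad u_i:=\langle S^{(i)},S^{(d+1)}\rangle .
\]

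The core step is a collision event. Each column of $S$ has support a uniformly random $s$-subset of $[m]$ with $s=d^{1/4}$, $m=d^{1.5}$, so two fixed columns have intersecting supports with probability $\Theta(s^2/m)=\Theta(1/d)$ and share two or more buckets with probability $O(1/d^2)$. Hence the number $N$ of indices $i\in[d]$ whose column of $S$ collides with column $d+1$ is a sum of $d$ i.i.d.\ Bernoulli$(\Theta(1/d))$ variables with $\E[N]=\Theta(1)$, and a Poisson-type estimate gives $\Pr[N=1]=\Theta(1)$; conditioned on $N=1$ the colliding pair shares exactly one bucket except with probability $O(1/d)$. Let $E'$ be the event that exactly one index $i_0\in[d]$ collides with column $d+1$, in a single bucket $\ell^\ast$; then $\Pr[E']=\Theta(1)$. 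On $E'$ every $u_i$ with $i\ne i_0$ vanishes (disjoint supports) while $u_{i_0}=S_{\ell^\ast,i_0}S_{\ell^\ast,d+1}=\pm 1/s$, so $u=\pm\tfrac1s e_{i_0}$.

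It remains to control $G^{-1}$. Since $\|S^{(i)}\|_2=1$ exactly for Count-Sketch, $G$ has unit diagonal; moreover the given Count-Sketch is, by the hypothesis recalled in the statement, an oblivious subspace embedding with failure probability $2^{-d^{\Omega(1)}}$, so applied to $\mathrm{colspace}(A)$ it gives $\|G-I\|_2\le\tfrac12$ outside an event of probability $2^{-d^{\Omega(1)}}$ (one can also verify this directly via Gershgorin plus a Chernoff bound on per-column collision counts). On that event $G\succeq\tfrac12 I$ and $G\preceq\tfrac32 I$, whence $(G^{-1})_{i_0i_0}=e_{i_0}^\top G^{-1}e_{i_0}\ge\tfrac23$, and combining with the previous paragraph, on $E'\cap\{\|G-I\|_2\le\tfrac12\}$ — an event of probability $\Theta(1)$ —
\[
\|x'-x^*\|_\infty \;\ge\; |(x'-x^*)_{i_0}| \;=\; \beta\,(G^{-1})_{i_0i_0}\cdot\tfrac1s \;\ge\; \frac{2\beta}{3\,d^{1/4}}.
\]
Since $\tfrac23 d^{-1/4}>\tfrac{\eps}{\sqrt d}$ for every $\eps<\tfrac23 d^{1/4}$ — in particular for any constant $\eps$, and a fortiori for the $\eps\approx d^{-1/4}$ that a $d^{1.5}$-row embedding would need to match Theorem~\ref{thm:main} — the guarantee \eqref{eq:ellinf} is violated with constant probability.

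I expect the main obstacle to be the probabilistic bookkeeping of the second step: showing that the ``exactly one collision'' event has \emph{constant} rather than vanishing probability is exactly what forces the parameters $m=d^{1.5}$, $s=d^{1/4}$ (so that $d\cdot s^2/m=\Theta(1)$), and one must simultaneously argue that the least-squares interaction among the $d$ retained columns of $S$ cannot cancel this single spike — which is precisely where the well-conditioning of $G$ enters.
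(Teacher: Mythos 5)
Your proof is correct, and while it shares the paper's core idea---engineer a Count-Sketch collision so that the residual part of $b$ injects an error of magnitude $\Theta(1/s)$ into a single coordinate of $x'$---the execution is genuinely different in two ways. First, your $b$ concentrates the entire residual in one extra coordinate (so $x^*=0$), and the relevant event is that this single residual column of $S$ hits exactly one of the $d$ signal columns; since the collision count is Binomial$(d,\Theta(s^2/m))$ with mean $\Theta(1)$ at $m=d^{1.5}$, $s=d^{1/4}$, a Poisson estimate gives constant probability directly. The paper instead spreads the residual over $\alpha$ coordinates and its Claim~\ref{cla:two_events_for_S_k} needs hypergeometric concentration to find a signal column hitting exactly one of them. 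Second---and this is the more substantive divergence---the paper additionally requires the chosen signal column to be support-disjoint from all other signal columns (its ``Event I''), which decouples the quadratic form and lets it solve for $x'_j$ in closed form; you dispense with isolation entirely by writing $x'=\beta G^{-1}u$ with $u=\pm\frac{1}{s}e_{i_0}$ and using $\tfrac12 I\preceq G\preceq\tfrac32 I$ (from the OSE property, or Gershgorin plus Chernoff) to get $(G^{-1})_{i_0i_0}\ge\tfrac23$, so no cancellation can occur. Your route is probabilistically lighter and yields a stronger violation, $\Theta(\|b\|_2/d^{1/4})$ versus the paper's $\Theta(\|b\|_2/d^{3/8})$; what the paper's $\alpha$-parametrized construction buys in exchange is coverage of the whole range $s^2d\lesssim m\lesssim\sqrt{d^3s}$ in Theorem~\ref{thm:count_sketch_not_infty}, whereas your single-residual-coordinate version needs $m=\Theta(s^2d)$ so that the collision occurs with constant rather than vanishing probability---a condition that the stated parameters $m=d^{1.5}$, $s=d^{0.25}$ happen to satisfy exactly, so your argument does prove the informal statement as given.
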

We can show that Theorem \ref{thm:main} holds for $S$ based
on the Count-Sketch OSE $T$ with $d^{O(C)}/\epsilon^2$ rows with $1-1/d^C$ probability. 
We can thus compose the Count-Sketch OSE with the SRHT matrix
and obtain an $O(\nnz(A)) + \poly(d/\epsilon)$ time algorithm to compute $S \cdot T A$
achieving (\ref{eq:ellinf}). We can also compute $R \cdot S \cdot T \cdot A$, where
$R$ is a matrix of Gaussians, which is more efficient now that $S T A$ only has
$d^{1+\gamma}/\epsilon^2$ rows; this will reduce the number of rows to $d/\epsilon^2$. 

Another common method of dimensionality reduction for linear
regression is \emph{leverage score
  sampling}~\cite{DMMW12,LMP13,PKB14,CMM15}, which subsamples the rows
of $A$ by choosing each row with probability proportional to its 
``leverage scores''.  With $O(d \log(d/\delta)/\eps^2)$ rows taken,
the result $x'$ will satisfy the $\ell_2$ bound~\eqref{eq:ell2} with
probability $1-\delta$.
However, it does not give a good $\ell_\infty$ bound:

\begin{theorem}[Leverage score sampling does not give the $\ell_{\infty}$ guarantee; informal version of Theorem \ref{thm:leverage_score_not_infty}]
  Leverage score sampling with $d^{1.5}$ rows---which satisfies the
  $\ell_2$ bound with exponentially small failure probability---with constant
  probability will have a result $x'$ that does not satisfy the
  $\ell_{\infty}$ guarantee (\ref{eq:ellinf}).
\end{theorem}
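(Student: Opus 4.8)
The plan is to construct a single hard instance on which leverage-score sampling with $m=d^{3/2}$ rows is, by the usual matrix-Chernoff argument, a $\Theta(1)$-subspace embedding (hence satisfies cost preservation and the $\ell_2$ bound~\eqref{eq:ell2} with failure probability $2^{-\Omega(\sqrt d)}$), yet with constant probability returns an $x'$ whose error is concentrated in one coordinate at scale $\Omega(d^{-1/4})\,\|Ax^*-b\|_2\,\|A^\dagger\|_2$, a $d^{\Omega(1)}$ factor above~\eqref{eq:ellinf}. Take $n=\poly(d)$ and $A\in\R^{n\times d}$ with orthonormal columns (so $\|A^\dagger\|_2=1$) such that one ``heavy'' row satisfies $a_{i_0}=d^{-1/4}e_1$, i.e.\ its leverage score is $\ell_{i_0}=\|a_{i_0}\|_2^2=d^{-1/2}$, while every other row is ``spread out'': $\|a_i\|_2^2=\Theta(d/n)$ and $\max_{i\ne i_0}A_{ij}^2=O(1/n)$ for all $j$. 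Such an $A$ is easy to build explicitly (e.g.\ a scaled Hadamard block on the rows $\ne i_0$ together with one extra column carrying the heavy entry). Set $b=r:=\rho\,e_{i_0}+r'$ where $r'$ is supported off row $i_0$ and chosen so that $A^\top r=0$; this is possible since $e_{i_0}\notin\mathrm{colspan}(A)$, and the minimal such $r'$ has $\|r'\|_2=O(\rho\,d^{-1/4})=o(\rho)$. Then $x^*=A^\dagger b=0$ and $\|Ax^*-b\|_2=\|r\|_2=(1+o(1))\rho$. The key parameter choice: under leverage-score sampling row $i_0$ is drawn with per-draw probability $\ell_{i_0}/d=d^{-3/2}$, so with $m=d^{3/2}$ draws it is \emph{missed} with probability $(1-d^{-3/2})^{d^{3/2}}\to e^{-1}$.

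Conditioned on the $\eps$-subspace-embedding event (which holds with probability $1-2^{-\Omega(\sqrt d)}$ for a small fixed constant $\eps$), $x'=(A^\top WA)^{-1}A^\top Wb=(I+E)^{-1}v$ where $v:=A^\top Wr$, $W$ is the random diagonal reweighting with $\E W=I$, and $\|E\|_2\le\eps$. I would then condition further on row $i_0$ being missed; there $v=\sum_{i\ne i_0}w_i a_i r'_i$ has conditional mean $(1+o(1))(A^\top r')_1 e_1=-(1+o(1))\rho\,d^{-1/4}e_1$ and, using $\max_{i\ne i_0}A_{ij}^2=O(1/n)$ and $p_i=\Theta(1/n)$, coordinatewise variance $\Var(v_j)=O(\|r'\|_2^2/m)=O(\rho^2 d^{-2})$. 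Hence with conditional probability $1-o(1)$, simultaneously $|v_1|\ge\tfrac12\rho\,d^{-1/4}$ and $\|v\|_2\le K\rho\,d^{-1/4}$ for an absolute constant $K$ — the off-$e_1$ mass $\sum_{j\ge2}\E v_j^2=O(\rho^2 d^{-1})$ is a $\sqrt d$ factor below $v_1^2=\Theta(\rho^2 d^{-1/2})$. Finally $\|x'-v\|_\infty\le\|x'-v\|_2=\|E(I+E)^{-1}v\|_2\le\tfrac{\eps}{1-\eps}\|v\|_2$, so choosing $\eps$ small enough that $\tfrac{\eps K}{1-\eps}<\tfrac14$ gives $|(x'-x^*)_1|\ge|v_1|-\tfrac{\eps K}{1-\eps}\rho\,d^{-1/4}\ge\tfrac14\rho\,d^{-1/4}\ge\tfrac15\,d^{-1/4}\,\|Ax^*-b\|_2\,\|A^\dagger\|_2$.

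Intersecting the three events (row $i_0$ missed, the conditional concentration, the subspace embedding), this lower bound holds with probability at least $\tfrac14$ for all large $d$, whereas~\eqref{eq:ellinf} with this fixed $\eps$ permits only $\tfrac{\eps}{\sqrt d}\,\|Ax^*-b\|_2\,\|A^\dagger\|_2=\Theta(d^{-1/2})\cdot\|Ax^*-b\|_2\,\|A^\dagger\|_2$, a polynomial-factor violation. The main obstacle is precisely the step controlling the correction $E(I+E)^{-1}v$ \emph{at the single coordinate $1$}: the crude bound gives only $\|E(I+E)^{-1}v\|_\infty\le\eps$, which is the same order as the signal; the resolution is to first show $v$ is itself concentrated in coordinate $1$ (its off-coordinate $\ell_2$ mass is a $\sqrt d$ factor smaller), so that the $O(\eps)$-relative perturbation cannot cancel $v_1$. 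This is where the variance computation and the calibration $\ell_{i_0}=d^{-1/2}$ — the largest a single unit-norm-column row can carry while still being missed with constant probability — are both essential. Verifying the hard instance is realizable (orthonormal columns, a prescribed heavy row, near-uniform leverage scores on the rest, and a residual $r\perp\mathrm{colspan}(A)$ yet essentially supported on row $i_0$) is routine, as is checking that the same row count gives~\eqref{eq:ell2} and cost preservation with failure probability $2^{-\Omega(\sqrt d)}$.
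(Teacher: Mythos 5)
Your proposal is correct, but your hard instance and analysis take a genuinely different route from the paper's. The paper takes $A$ to be the $d\times d$ identity stacked on $L=\alpha(d-1)$ copies of $\frac{1}{\sqrt{\alpha d}}I_d$, so that both the original and the sampled least-squares problems decouple into $d$ independent scalar problems; the spike is placed in $b$ on $\beta$ rows of the lower blocks, each with leverage score $1/(\alpha d)$, and the whole argument is that one of those rows is dropped, shifting the corresponding scalar optimum from $\approx 1/d$ to $\approx d/m$. This block structure avoids any matrix perturbation analysis, but at $m=d^{1.5}$ it yields a violation of~\eqref{eq:ellinf} of size $\Theta(1/\sqrt d)\norm{b}_2$ against the allowed $\eps/\sqrt{d}$, i.e.\ only a $1/\eps$ factor. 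You instead concentrate the leverage in a single heavy row of $A$ with score calibrated to $d^{-1/2}$ --- the largest value that is still missed with constant probability after $d^{3/2}$ draws --- and put the residual $b$ essentially on that row, orthogonal to the column span so that $x^*=0$. Because your instance does not decouple, you need the additional step of controlling the correction $(I+E)^{-1}v$ at the single coordinate carrying the signal; you resolve this correctly by observing that $v=A^\top W b$ itself concentrates on coordinate $1$ (its off-coordinate $\ell_2$ mass is an expected $\sqrt d$ factor smaller than $v_1^2$), so an $O(\eps)$ relative perturbation cannot cancel $v_1$. The payoff for the extra work is a polynomially larger violation, $\Omega(d^{-1/4})$ versus $\eps d^{-1/2}$, rather than the paper's $1/\eps$ factor. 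The remaining pieces of your argument (realizability of the orthonormal $A$ with one prescribed heavy row, the conditional mean and variance of $v$ given that the heavy row is missed, and the matrix-Chernoff subspace-embedding event with failure probability $2^{-\Omega(\sqrt d)}$) all check out, so both arguments establish the theorem.
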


%
%

Finally, we show that the $d^{1+\gamma}/\eps^2$ rows that SRHT
matrices use is roughly optimal:
\begin{theorem}[Lower bounds for $\ell_2$ and $\ell_\infty$
  guarantees; informal versions of of Theorem \ref{thm:l2_lower_bound}
  and Corollary \ref{cor:linf_lower_bound}]
  Any sketching matrix distribution over $m \times n$ matrices that
  satisfies either the $\ell_{2}$ guarantee~\eqref{eq:ell2} or the
  $\ell_\infty$ guarantee~\eqref{eq:ellinf} must have $m \gtrsim
  \min(n,d/\eps^2)$.
\end{theorem}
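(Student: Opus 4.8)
The plan is to argue the contrapositive through Yao's principle: I will exhibit one distribution $\mathcal D$ over instances $(A,b)$ such that every \emph{fixed} $m\times n$ matrix $S$ with $m \ll \min(n,d/\eps^2)$ outputs, with constant probability over $(A,b)\sim\mathcal D$, a vector $x'=(SA)^\dagger Sb$ violating~\eqref{eq:ell2}; since a sketching \emph{distribution} meeting the guarantee for all inputs with small failure probability must, averaged against $\mathcal D$, contain a fixed $S$ that fails on $\mathcal D$ only rarely, this is a contradiction. The hard instance is a planted random subspace plus orthogonal noise: let $A\in\R^{n\times d}$ have orthonormal columns spanning a uniformly random $d$-dimensional subspace (so $\norm{A^\dagger}_2=1$), and let $b=\tfrac{1}{\sqrt{n-d}}Qg$ where $Q$ spans $\mathrm{col}(A)^\perp$ and $g\sim N(0,I_{n-d})$. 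Then $x^*=0$, the residual is $-b$ with $\E\norm{b}_2^2=1$, and~\eqref{eq:ell2} reduces to the demand $\norm{(SA)^\dagger Sb}_2\le O(\eps)\norm{b}_2$.

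The technical heart is a lower bound on $\norm{(SA)^\dagger S}_F^2$ valid for \emph{every} $S$. Using that $[A\mid Q]$ is orthogonal and $(SA)^\dagger SA=I_d$ (for $\rank(SA)=d$), one gets the exact identity $\E_b\norm{(SA)^\dagger Sb}_2^2=\tfrac{1}{n-d}\big(\norm{(SA)^\dagger S}_F^2-d\big)$. Next, writing $P_S$ for the orthogonal projection onto the row space of $S$ and using $SP_S=S$, the matrix $(SA)^\dagger S$ is a \emph{left inverse} of $P_SA$; since the pseudoinverse minimizes Frobenius norm among left inverses and $\rank(P_SA)=d$, we get $\norm{(SA)^\dagger S}_F^2\ge\norm{(P_SA)^\dagger}_F^2\ge d^2/\norm{P_SA}_F^2$ by AM--HM. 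Finally $\E_A\norm{P_SA}_F^2=\tfrac{d}{n}\mathrm{tr}(P_S)\le dm/n$, so by Markov $\norm{P_SA}_F^2\le 4dm/n$ with probability $\ge 3/4$ over $A$; on that event $\E_b\norm{x'}_2^2\ge\Omega(d/m)$, provided $m\le n/8$ (and $m>n/8$ already forces $m=\Omega(n)$).

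To convert this into a failure probability, note $\norm{x'}_2^2$ is a quadratic form in the Gaussian $g$, hence has bounded fourth moment, so Paley--Zygmund gives $\norm{x'}_2^2=\Omega(d/m)$ with constant probability; together with the $\chi^2$ concentration $\norm{b}_2\le\sqrt2$ and the good-$A$ event we find, with constant probability, both $\norm{x'}_2^2=\Omega(d/m)$ and $\norm{b}_2\le\sqrt2$. If~\eqref{eq:ell2} held there we would get $m=\Omega(d/\eps^2)$; otherwise the guarantee fails. A separate easy observation handles $\rank(S)<d$: the residual-zero instance $b=Av$ with $v$ a uniformly random unit vector has $x^*=v$ but $x'=(SA)^\dagger Sb$ equal to the projection of $v$ onto the $(<d)$-dimensional $\mathrm{rowspace}(SA)$, so $x'\ne v$ almost surely while the right side of~\eqref{eq:ell2} is $0$. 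Combining the two cases yields $m=\Omega(\min(n,d/\eps^2))$. The $\ell_\infty$ statement follows immediately since $\norm{z}_2\le\sqrt d\,\norm{z}_\infty$, so~\eqref{eq:ellinf} implies~\eqref{eq:ell2}.

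The step I expect to be the crux is making the bound robust to an \emph{arbitrary} sketch $S$: the pairing of (i) identifying $(SA)^\dagger S$ as a left inverse of $P_SA$ and invoking Frobenius-minimality of the pseudoinverse, and (ii) randomizing the column space of $A$ so that $\norm{P_SA}_F^2\approx dm/n\ll d$. This randomization is essential and is conceptually the whole point---for a fixed $A$, e.g.\ $A=\binom{I_d}{0}$ with $S=[I_d\mid 0]$, only $m=d$ rows already satisfy~\eqref{eq:ell2}---and it is exactly where the quantitative gain over the trivial $m\ge d$ (and hence the $1/\eps^2$) is produced. A secondary point: the expectation-to-probability step is clean for $x'=(SA)^\dagger Sb$, but upgrading the lower bound to hold against \emph{any} function of $(SA,Sb)$ would require replacing Paley--Zygmund with a minimax/Fano argument over a planted $v\sim N(0,\sigma^2 I_d)$ with $\sigma^2$ tuned to the injected noise; I would record this as a remark.
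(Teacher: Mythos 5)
Your proposal is correct, and while it shares the paper's outer scaffolding---Yao's principle, a column space of $A$ randomized so that no fixed $S$ can align with it, a separate zero-residual instance to force $\rank(SA)=d$, and the final conversion $\norm{x'}_2\gtrsim\sqrt{d/m}$ into $m\gtrsim d/\eps^2$---the technical core is genuinely different. The paper's proof is distributional: it takes $[A,b]$ with jointly random orthonormal columns, invokes Jiang's theorem to replace the relevant $m\times(d+1)$ orthogonal submatrix by an i.i.d.\ Gaussian matrix in total variation (which is why the writeup needs $n=\Omega(\max(m,d)^2)$, e.g.\ $n=\Omega(d^3)$), and then lower-bounds $\norm{\wt{\Sigma}^\dagger(\Sigma R)^\dagger\Sigma h}_2$ via an operator-norm bound $\norm{\wt{\Sigma}}_2\lesssim\sqrt{r/n}$ on the Gaussian factor together with the Pythagorean identity $\norm{(\Sigma R)^\dagger\Sigma}_F^2\ge\norm{(\Sigma R)^\dagger\Sigma R}_F^2=\rank(SA)=d$. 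Your route is variational and essentially exact: the identity $\E_b\norm{(SA)^\dagger Sb}_2^2=\frac{1}{n-d}\bigl(\norm{(SA)^\dagger S}_F^2-d\bigr)$, the observation that $(SA)^\dagger S$ is a left inverse of $P_SA$ (via $SP_S=S$) combined with the minimum-Frobenius-norm characterization of the pseudoinverse, AM--HM on the singular values, and the single probabilistic input $\E_A\norm{P_SA}_F^2=\frac{d}{n}\operatorname{tr}(P_S)\le dm/n$. What your approach buys is the removal of the Gaussian-approximation step and hence of any largeness assumption on $n$ beyond the trivial dichotomy $m\le n/8$ versus $m=\Omega(n)$, plus a cleaner identification of where the $1/\eps^2$ comes from (the ratio $d/\norm{P_SA}_F^2\approx n/m$); what the paper's approach buys is explicit distributional control of $SA$ and $Sb$ as Gaussians, which is reusable for statements beyond a second-moment lower bound. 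Your Paley--Zygmund step and the constant bookkeeping ($\tfrac34\cdot\tfrac1{12}>\tfrac1{20}$) are fine against the paper's $19/20$ success threshold, and the $\ell_\infty$ corollary via $\norm{z}_2\le\sqrt d\,\norm{z}_\infty$ matches the paper's Corollary~\ref{cor:linf_lower_bound}.
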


Notice that our result shows the necessity of the $1/\eps$ separation
between the results originally defined in Equation (3) and (4) of
Theorem 12 of \cite{sar06}.  If we want to output some vector $x'$
such that $\| Ax'-b \|_2 \leq (1+\eps) \| A x^* -b\|_2$, then it is
known that $m=\Theta(d/\eps)$ is necessary and sufficient.  However,
if we want to output a vector $x'$ such that $\| x'-x^*\|_2 \leq \eps
\| A x^* -b \|_2 \cdot \|A^\dagger \|_2$, then we show that $m =
\Theta(d/\eps^2)$ is necessary and sufficient.

\subsubsection{Comparison to Gradient Descent}
While this work is primarily about sketching methods, one could instead apply 
iterative methods such as gradient descent, after appropriately
preconditioning the matrix, see, e.g., \cite{amt10,zf13, CW13}. 
That is, one can use an OSE with constant
$\eps$ to construct a preconditioner for $A$ 
and then run conjugate gradient using the
preconditioner. This gives an overall dependence of $\log(1/\epsilon)$. 

The main drawback of this approach is
that one loses
the ability to save on
storage space or number of passes when $A$ appears in a stream, or to save on
communication or rounds when $A$ is distributed. Given increasingly
large data sets, such scenarios are now quite common, see, e.g., \cite{CW09}
for regression algorithms in the data stream model. 
In situations where the entries of $A$ appear
sequentially, for example, a row at a time, one does not need to store
the full $n \times d$ matrix $A$ but only the $m \times d$ matrix
$SA$.

Also, iterative methods can be less efficient
when solving multiple response regression, where one wants to
minimize $\norm{AX - B}$ for a $d \times t$ matrix $X$ and an $n \times t$
matrix $B$. This is the case when $\eps$ is constant and $t$ is large, 
which can occur in some applications (though there are also other
applications for which $\eps$ is very small). For 
example, conjugate gradient with a preconditioner will take
$\Ot(ndt)$ time while using an OSE directly will take only $\Ot(nd +
d^2t)$ time (since one effectively replaces $n$ with $O~(d)$ after computing $S \cdot A$), separating $t$ from $d$. Multiple response regression, arises,
for example, in the RLSC application above. 

\subsubsection{Proof Techniques}
          \noindent {\bf Theorem \ref{thm:main}.}
          As noted in Theorem~\ref{thm:negative}, there are some OSEs for which
our generalization error bound does not hold. This hints that our
analysis is non-standard and cannot use generic properties of OSEs as
a black box. Indeed, in our analysis, we have to consider matrix
products of the form $S^\top S (UU^\top S^\top S)^k$ for our random
sketching matrix $S$ and a fixed matrix $U$, where $k$ is a positive
integer. We stress that it is the {\it same matrix} $S$ appearing
multiple times in this expression, which considerably complicates the
analysis, and does not allow us to appeal to standard results on
approximate matrix product (see, e.g., \cite{woo14} for a survey).
The key idea is
to recursively reduce $S^\top S (UU^\top S^\top S)^k$ using a property
of $S$. We use properties that only hold for specifics OSEs $S$:
first, that each column of $S$ is unit vector; and
second, that for all pairs $(i,j)$ and $i\neq j$, the inner product
between $S_i$ and $S_j$ is at most $\frac{\sqrt{\log n}}{\sqrt{m}}$
with probability $1-1/\poly(n)$.
\\\\
\noindent {\bf Theorems \ref{thm:count_sketch_not_infty} and \ref{thm:leverage_score_not_infty}.}
To show that Count-Sketch does not give the $\ell_{\infty}$ guarantee,
we construct a matrix $A$ and vector $b$ as in Figure
\ref{fig:A_b_count_sketch_not_linf}, which has optimal solution $x^*$
with all coordinates $1/\sqrt{d}$.  We then show, for our setting of
parameters, that there likely exists an index $j \in [d]$ satisfying
the following property: the $j$th column of $S$ has disjoint support
from the $k$th column of $S$ for all $k \in [d+\alpha]\setminus \{j\}$
except for a single $k > d$, for which $S_j$ and $S_k$ share exactly
one common entry in their support. In such cases we can compute $x'_j$
explicitly, getting $\abs{x'_j - x^*_j} = \frac{1}{s\sqrt{\alpha}}$.
By choosing suitable parameters in our construction, this gives that
$\norm{x'-x^*}_\infty \gg \frac{1}{\sqrt{d}}$. 
The lower bound for leverage score sampling follows a similar
construction.
\\\\
\noindent {\bf Theorem \ref{thm:l2_lower_bound} and Corollary \ref{cor:linf_lower_bound}.}
The lower bound proof for the $\ell_{2}$ guarantee uses Yao's minimax
principle. We are allowed to fix an $m\times n$ sketching matrix $S$
and design a distribution over $[A ~ b]$. We first write the sketching matrix $S = U \Sigma V^\top$ in its singular value decomposition (SVD). We choose the $d+1$ columns of the adjoined matrix $[A, b]$ to be random orthonormal vectors. Consider an $n \times n$ orthonormal matrix $R$ which contains the columns of $V$ as its first $m$ columns, and is completed on its remaining $n-m$ columns to an arbitrary orthonormal basis. Then $S \cdot [A, b] = V^\top R R^\top \cdot [A,b] = [U \Sigma I_m, 0] \cdot [R^\top A, R^\top b]$. Notice that $[R^\top A, R^\top b]$ is equal in distribution to $[A, b]$, since $R$ is fixed and $[A, b]$ is a random matrix with $d+1$ orthonormal columns. Therefore, $S \cdot [A, b]$ is equal in distribution to $[U \Sigma G, U \Sigma h]$ where $[G, h]$ corresponds to the first $m$ rows of an $n \times (d+1)$ uniformly random matrix with orthonormal columns. 

A key idea is that if $n = \Omega(\max(m,d)^2)$, then by a result of Jiang \cite{J06}, any $m \times (d+1)$ submatrix of a random $n \times n$ orthonormal matrix has $o(1)$ total variation distance to a $d \times d$ matrix of i.i.d. $N(0,1/n)$ random variables, and so any events that would have occurred had $G$ and $h$ been independent i.i.d. Gaussians, occur with the same probability for our distribution up to an $1-o(1)$ factor, so we can assume $G$ and $h$ are independent i.i.d. Gaussians in the analysis. 

The optimal solution $x'$ in the sketch space equals $(SA)^{\dagger} Sb$, and by using that $SA$ has the form $U \Sigma G$, one can manipulate $\|(SA)^{\dagger} Sb\|$ to be of the form $\|\tilde{\Sigma}^{\dagger} (\Sigma R)^{\dagger} \Sigma h\|_2$, where the SVD of $G$ is $R \tilde{\Sigma} T$. We can upper bound $\|\tilde{\Sigma}\|_2$ by $\sqrt{r/n}$, since it is just the maximum singular value of a Gaussian matrix, where $r$ is the rank of $S$, which allows us to lower bound $\|\tilde{\Sigma}^{\dagger} (\Sigma R)^{\dagger} \Sigma h\|_2$ by $\sqrt{n/r}\|(\Sigma R)^{\dagger} \Sigma h\|_2$. Then, since $h$ is i.i.d. Gaussian, this quantity concentrates to $\frac{1}{\sqrt{r}} \|(\Sigma R)^{\dagger} \Sigma h\|$, since $\|Ch\|^2 \approx \|C\|_F^2/n$ for a vector $h$ of i.i.d. $N(0,1/n)$ random variables. Finally, we can lower bound $\|(\Sigma R)^{\dagger} \Sigma\|_F^2$ by $\|(\Sigma R)^{\dagger} \Sigma RR^\top \|_F^2$ by the Pythagorean theorem, and now we have that $(\Sigma R)^{\dagger} \Sigma R$ is the identity, and so this expression is just equal to the rank of $\Sigma R$, which we prove is at least $d$. Noting that $x^* = 0$ for our instance, putting these bounds together gives
$\|x'-x^*\| \geq \sqrt{d/r}$. The last ingredient is a way to ensure that the rank of $S$ is at least $d$. Here we choose another distribution on inputs $A$ and $b$ for which it is trivial to show the rank of $S$ is at least $d$ with large probability. We require $S$ be good on the mixture. Since $S$ is fixed and good on the mixture, it is good for both distributions individually, which implies we can assume $S$ has rank $d$ in our analysis of the first distribution above. 

\subsection{Notation}
For a positive integer, let $[n] = \{1,2,\dotsc,n\}$. For a vector $x\in \mathbb{R}^n$, define $\| x\|_2=(\sum_{i=1}^n x_i^2 )^{\frac{1}{2}}$ and $\| x \|_{\infty}= \max_{i\in [n]} |x_i|$. For a matrix $A\in \mathbb{R}^{m\times n}$, define $\| A\|_2 = \sup_x \|Ax\|_2 /\|x\|_2$ to be the spectral norm of $A$ and $\| A\|_F =  ( \sum_{i,j} A_{i,j}^2  )^{1/2}$ to be the Frobenius norm of $A$. We use $A^\dagger$ to denote the Moore-Penrose pseudoinverse of $m\times n$ matrix $A$, which if $A = U \Sigma V^\top$ is its SVD (where $U\in \mathbb{R}^{m\times n}$, $\Sigma\in \mathbb{R}^{n\times n}$ and $V\in \mathbb{R}^{n\times}$ for $m\geq n$), is given by $A^{\dagger} = V \Sigma^{-1} U^\top$. 

In addition to $O(\cdot)$ notation, for two functions $f,g$, we use the shorthand $f\lesssim g$ (resp. $\gtrsim$) to indicate that $f\leq C g$ (resp. $\geq$) for an absolute constant $C$. We use $f\eqsim g$ to mean $cf\leq g\leq Cf$ for constants $c,C$.

\begin{definition}[Subspace Embedding]\label{def:subspace_embedding}
A $(1\pm\epsilon)$ $\ell_2$-subspace embedding for the column space of an $n\times d$ matrix $A$ is a matrix $S$ for which for all $x\in \mathbb{R}^d$, $\| SA x\|_2^2 = (1\pm \epsilon) \| A x\|_2^2$.
\end{definition}
\begin{definition}[Approximate Matrix Product]\label{def:approximate_matrix_product}
  Let $0<\epsilon<1$ be a given approximation parameter. Given matrices $A$ and $B$, where $A$ and $B$ each have $n$ rows, the goal is to output a matrix $C$ so that $\| A^\top B - C\|_F \leq \epsilon \| A \|_F \| B \|_F$. Typically $C$
  has the form $A^\top S^\top S B$, for a random matrix $S$ with a small
  number of rows. In particular, this guarantee holds for the subsampled randomized Hadamard transform $S$ with $O(\epsilon^{-2})$ rows \cite{DMMS11}. 
\end{definition}

\begin{figure}[!t]
  \centering
    \includegraphics[width=0.6\textwidth]{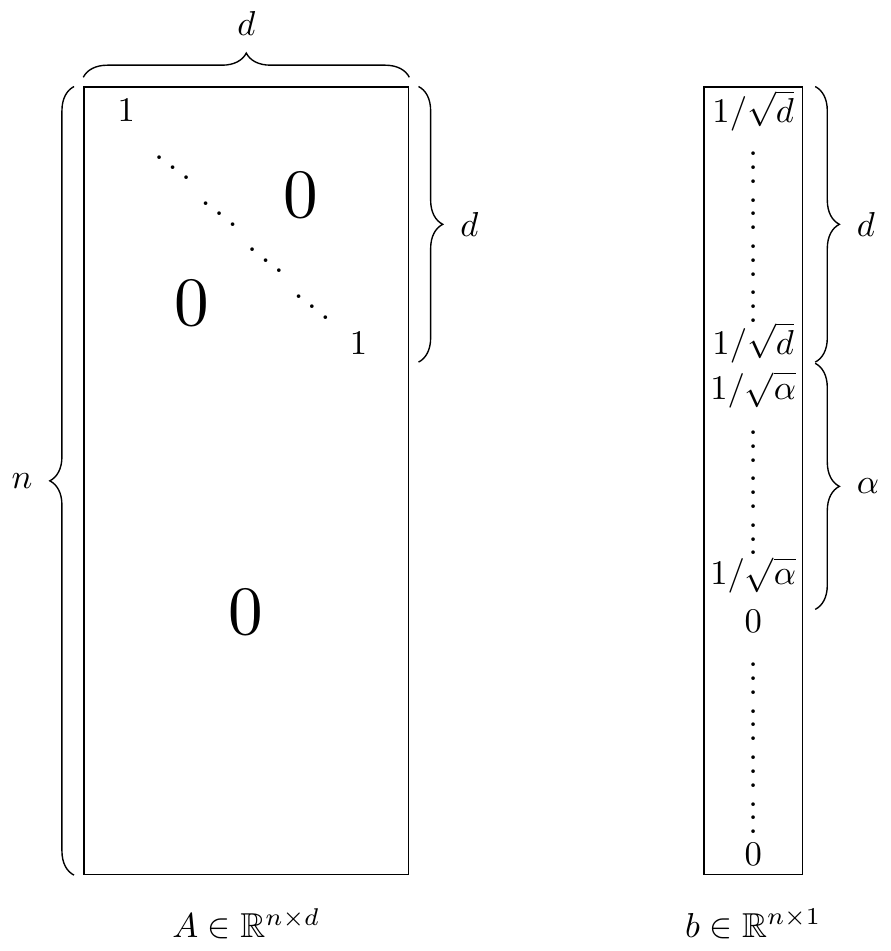}
    \caption{Our construction of $A$ and $b$ for the proof that Count-Sketch does not obey the $\ell_{\infty}$ guarantee. $\alpha < d$.}
      \label{fig:A_b_count_sketch_not_linf}
\end{figure}

\section{Warmup: Gaussians OSEs}\label{sec:gaussians}

We first show that if $S$ is a Gaussian random matrix, then it
satisfies the generalization guarantee.  This follows from the
rotational invariance of the Gaussian distribution.

\begin{theorem}\label{thm:gaussians}
  Suppose $A \in \R^{n \times d}$ has full column rank.  If the
  entries of $S \in \mathbb{R}^{m \times n}$ are i.i.d. $N(0,1/m)$, $m
  = O(d/\eps^2)$, then for any vectors $a, b$ and $x^* = A^\dagger b$, we
  have, with probability $1-1/\poly(d)$,
  \[
  |a^\top (SA)^\dagger Sb - a^\top x^*| \lesssim \frac{\eps \sqrt{\log d}}{\sqrt{d}} \norm{a}_2 \norm{b - Ax^*}_2 \norm{A^\dagger }_2.
  \]
\end{theorem}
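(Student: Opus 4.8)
The strategy is to isolate all the randomness into a single one-dimensional Gaussian, exploiting that for a Gaussian $S$ the images under $S$ of mutually orthogonal directions are independent. Write the thin SVD $A = U\Sigma V^\top$ with $U \in \R^{n\times d}$ having orthonormal columns and $V\in\R^{d\times d}$ orthogonal, and let $r = b - Ax^*$, so that $A^\top r = 0$ by the normal equations defining $x^*$. Since $m = O(d/\eps^2)$ i.i.d.\ $N(0,1/m)$ rows form a $(1\pm\eps)$ $\ell_2$-subspace embedding for the $d$-dimensional column space of $A$ with probability $1-1/\poly(d)$ (standard $\eps$-net argument), on that event $SA$ has full column rank and $x' = (SA)^\dagger Sb = (A^\top S^\top SA)^{-1}A^\top S^\top Sb$. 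Using $A^\top r = 0$ gives $x'-x^* = (A^\top S^\top SA)^{-1}A^\top S^\top Sr$; substituting the SVD and cancelling the orthogonal factor $V$ leaves
\[
  \inner{a,\, x'-x^*} \;=\; a^\top V\Sigma^{-1}\,(U^\top S^\top SU)^{-1}\,U^\top S^\top S r .
\]

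Now comes the key point. The columns of $[\,U \;\; r/\norm{r}_2\,]$ are orthonormal, so $G := SU$ (which has i.i.d.\ $N(0,1/m)$ entries) and $g := Sr/\norm{r}_2 \sim N(0,I_m/m)$ are \emph{independent}. Setting $M := G^\top G = U^\top S^\top SU$ and $w^\top := a^\top V\Sigma^{-1}M^{-1}G^\top$, the display above reads $\inner{a,x'-x^*} = \norm{r}_2\, w^\top g$. Condition on $G$ (hence on $w$): since $g$ is independent of $G$ we have $w^\top g \mid G \sim N(0,\norm{w}_2^2/m)$, so a one-dimensional Gaussian tail bound gives $\abs{w^\top g} \lesssim \sqrt{\log d}\,\norm{w}_2/\sqrt{m}$ with probability $1-1/\poly(d)$.

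Finally I would bound $\norm{w}_2$ deterministically on the subspace-embedding event. Because $G^\top G = M$, we get $\norm{w}_2^2 = a^\top V\Sigma^{-1}M^{-1}G^\top G M^{-1}\Sigma^{-1}V^\top a = a^\top V\Sigma^{-1}M^{-1}\Sigma^{-1}V^\top a$; the embedding guarantee puts the eigenvalues of $M$ in $[1-\eps,1+\eps]$, so $\norm{M^{-1}}_2 \le 2$, and together with $\norm{\Sigma^{-1}}_2 = \norm{A^\dagger}_2$ and $\norm{V^\top a}_2 = \norm{a}_2$ this yields $\norm{w}_2 \le \sqrt{2}\,\norm{A^\dagger}_2\norm{a}_2$. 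Combining the last two displays, using $1/\sqrt m \eqsim \eps/\sqrt d$ (from $m = \Theta(d/\eps^2)$), and taking a union bound over the two failure events gives
\[
  \abs{\inner{a,x'-x^*}} \;\lesssim\; \frac{\eps\sqrt{\log d}}{\sqrt d}\,\norm{a}_2\,\norm{b-Ax^*}_2\,\norm{A^\dagger}_2 .
\]

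The only delicate step is the independence of $SU$ and $Sr$: this is exactly where rotational invariance of the Gaussian enters, and (as Theorem~\ref{thm:negative} shows) it is the kind of property that fails for sparse sketches, so the argument genuinely does not extend to arbitrary OSEs. Everything else is routine bookkeeping with the SVD and a scalar Gaussian tail bound; I would just be careful to confirm that the single $O(d/\eps^2)$-row Gaussian subspace-embedding event simultaneously delivers full column rank of $SA$ (so the pseudoinverse manipulations are valid) and the bound $\norm{M^{-1}}_2 \le 2$.
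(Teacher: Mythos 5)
Your proposal is correct and follows essentially the same route as the paper's proof (Appendix~\ref{app:gaussian}): reduce to the residual $r=b-Ax^*$ via $(SA)^\dagger SA=I$, observe that $SU$ and $Sr$ are independent because $U^\top r=0$ and $S$ is Gaussian, condition on the subspace-embedding event for $SU$, and finish with a one-dimensional Gaussian tail bound; your exact simplification $\norm{w}_2^2=a^\top V\Sigma^{-1}M^{-1}\Sigma^{-1}V^\top a$ is a marginally cleaner version of the paper's submultiplicativity estimate but changes nothing substantive.
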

Because $SA$ has full column rank with probability $1$, $(SA)^\dagger SA =
I$.  Therefore
\begin{align*}
|a^\top (SA)^\dagger Sb - a^\top x^*| &= |a^\top (SA)^\dagger S (b - Ax^*)| = |a^\top (SA)^\dagger S (b - AA^\dagger b)|.
\end{align*}
Thus it suffices to only consider vectors $b$ where $A^\dagger b = 0$, or
equivalently $U^\top b = 0$.  In such cases, $SU$ will be independent of
$Sb$, which will give the result.  The proof is in Appendix~\ref{app:gaussian}.

\section{SRHT Matrices}
We first provide the definition of the subsampled randomized Hadamard transform(SRHT):
let $S=\frac{1}{\sqrt{rn}} P H_n D$. Here, $D$ is an $n\times n$ diagonal matrix
with i.i.d. diagonal entries $D_{i,i}$, for which $D_{i,i}$ in uniform on
$\{-1,+1\}$. The matrix $H_n$ is 
the Hadamard matrix of size $n \times n$, and we assume $n$ is a power of $2$.
Here, $H_n=[ H_{n/2}, ~ H_{n/2}; H_{n/2}, ~ - H_{n/2} ]$ and $H_1=[1]$.
The $r\times n$ matrix $P$ samples $r$ coordinates of an $n$ dimensional vector uniformly at random.

For other subspace embeddings, we no longer have that $SU$ and $Sb$
are independent.  To analyze them, we start with a claim that allows
us to relate the inverse of a matrix to a power series.
\begin{claim}\label{claim:series}
  Let $S \in \R^{m \times n}$, $A \in \R^{n \times d}$ have SVD $A = U
  \Sigma V^\top $, and define $T \in \R^{d \times d}$ by
  \[
  T = I_d - U^\top  S^\top  S U.
  \]
  Suppose $SA$ has linearly independent columns and $\|T\|_2 \leq 1/2$. 
  Then
  \begin{align}
    (SA)^\dagger S  = V \Sigma^{-1} \left(\sum_{k = 0}^\infty T^k
      \right)U^\top  S^\top  S. \label{eq:powerseries}
  \end{align}
\end{claim}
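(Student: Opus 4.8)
The plan is to recognize the right-hand side of \eqref{eq:powerseries} as arising from the Neumann series expansion of $(U^\top S^\top S U)^{-1}$, and to verify the identity by checking that, when both sides are multiplied on the left by $SA$ (equivalently, by $U^\top S^\top S$ composed with $\Sigma V^\top$), they agree. First I would observe that since $\|T\|_2 \le 1/2 < 1$, the Neumann series $\sum_{k=0}^\infty T^k$ converges in operator norm and equals $(I_d - T)^{-1}$. By the definition of $T$, we have $I_d - T = U^\top S^\top S U$, so in particular $U^\top S^\top S U$ is invertible, and
\[
\sum_{k=0}^\infty T^k = (U^\top S^\top S U)^{-1}.
\]
Thus the claimed formula is equivalent to
\[
(SA)^\dagger S = V\Sigma^{-1}(U^\top S^\top S U)^{-1} U^\top S^\top S.
\]

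Next I would establish this latter identity directly from the definition of the pseudoinverse. Writing $A = U\Sigma V^\top$ with $U \in \R^{n\times d}$ having orthonormal columns and $\Sigma, V \in \R^{d\times d}$ invertible (using that $A$ has full column rank, as assumed throughout), we get $SA = SU\Sigma V^\top$. Since $SA$ has linearly independent columns, $(SA)^\dagger = ((SA)^\top SA)^{-1}(SA)^\top$. Substituting $SA = SU\Sigma V^\top$ and using that $\Sigma, V$ are invertible,
\[
(SA)^\top SA = V\Sigma (U^\top S^\top S U)\Sigma V^\top,
\]
which is invertible precisely because $U^\top S^\top S U$ is (which we just deduced from $\|T\|_2\le 1/2$); note this is consistent with $SA$ having independent columns. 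Then
\[
(SA)^\dagger = \big(V\Sigma (U^\top S^\top S U)\Sigma V^\top\big)^{-1} V\Sigma U^\top S^\top = V^{-\top}\Sigma^{-1}(U^\top S^\top S U)^{-1}\Sigma^{-1}\Sigma U^\top S^\top = V\Sigma^{-1}(U^\top S^\top S U)^{-1}U^\top S^\top,
\]
where I used $V^{-\top} = V$ since $V$ is orthogonal. Multiplying on the right by $S$ gives exactly $V\Sigma^{-1}(U^\top S^\top S U)^{-1}U^\top S^\top S$, as desired.

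The only genuinely delicate point—the ``main obstacle,'' such as it is—is bookkeeping the hypotheses to ensure every inverse written actually exists and that $(SA)^\dagger$ indeed reduces to the left-inverse formula $((SA)^\top SA)^{-1}(SA)^\top$: this needs $SA$ to have full column rank, which is hypothesized, and it needs $U^\top S^\top S U = I_d - T$ to be invertible, which follows from $\|T\|_2 \le 1/2$. I would also remark that the convergence of $\sum_k T^k$ in \eqref{eq:powerseries} is in operator norm, geometric with ratio $\le 1/2$, so the infinite sum is well-defined and the manipulations above are rigorous. Everything else is routine linear algebra, so I would keep the written proof short: state the Neumann series fact, identify $I_d - T = U^\top S^\top S U$, and then do the one-line pseudoinverse computation.
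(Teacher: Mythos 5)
Your proposal is correct and matches the paper's own proof essentially line for line: both use $(SA)^\dagger = ((SA)^\top SA)^{-1}(SA)^\top$ for a full-column-rank matrix, substitute the SVD $A = U\Sigma V^\top$, cancel the invertible factors $V$ and $\Sigma$, and then replace $(I_d - T)^{-1}$ by the convergent Neumann series. Your extra bookkeeping about which inverses exist is a welcome, but not substantively different, addition.
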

\begin{proof}

  \begin{align*}
(SA)^\dagger S  = & ~ (A^\top S^\top SA)^{-1}A^\top S^\top S \\
 = & ~(V\Sigma U^\top S^\top SU\Sigma V^\top )^{-1}V\Sigma U^\top S^\top S \\
 = & ~ V\Sigma^{-1}(U^\top S^\top SU)^{-1} U^\top S^\top S\\
 = & ~ V\Sigma^{-1}(I_d - T)^{-1} U^\top S^\top S \\
 = & ~ V\Sigma^{-1} \left(\sum_{k=0}^\infty T^k \right) U^\top S^\top S,
  \end{align*}

where in the last equality, since $\|T\|_2 < 1$, the von Neumann
series $\sum_{k=0}^{\infty} T^k$ converges to $(I_d-T)^{-1}$. 
\end{proof}

We then bound the $k$th term of this sum:

\define{Lemma}{l:bound_T_k}{%
  Let $S \in \R^{r \times n}$ be the subsampled randomized Hadamard
  transform, and let $a$ be a unit vector. 
Then with probability $1-1/\mathrm{poly}(n)$, we have
\begin{align*}
|a^\top  S^\top  S  ( UU^\top  S^\top  S)^k b| =& O(\log^k n) \cdot  \left(  O(d(\log n)/r) + 1 \right)^\frac{k-1}{2} \cdot  (\sqrt{d} \|b\|_2 (\log n)/r + \| b\|_2 (\log^\frac{1}{2} n)/r^\frac{1}{2}) 
\end{align*}
Hence, for $r$ at least $d \log^{2k+2} n \log^2 (n/\eps) / \eps^2$, this is
at most $O(\|b\|_2 \eps / \sqrt{d})$ with probability at least
$1-1/\mathrm{poly}(n)$.
}
\state{l:bound_T_k}
We defer the proof of this lemma to the next section, and now show
how the lemma
lets us prove that SRHT matrices satisfy the generalization
bound with high probability:
\begin{theorem}\label{thm:eps}
  Suppose $A \in \R^{n \times d}$ has full column rank with $\log n =
  d^{o(1)}$.  Let $S \in \R^{m \times n}$ be a subsampled randomized
  Hadamard transform with $m = O(d^{1+\alpha}/\eps^2)$ for $\alpha = \Theta(\sqrt{\frac{\log \log n}{\log d}})$.  For any vectors $a, b$ and $x^* = A^\dagger b$, we have
  \[
  |a^\top (SA)^\dagger Sb - a^\top x^*| \lesssim \frac{\eps}{\sqrt{d}} \norm{a}_2 \norm{b - Ax^*}_2 \norm{\Sigma^{-1}}_2
  \]
  with probability $1-1/\mathrm{poly}(d)$. 
\end{theorem}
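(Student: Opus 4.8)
The plan is to combine Claim \ref{claim:series} with Lemma \ref{l:bound_T_k}. As in the Gaussian warmup, the first step is to reduce to the case $U^\top b = 0$: since $SA$ has full column rank (which follows from the subspace embedding property, once we check $\|T\|_2 \leq 1/2$), we have $(SA)^\dagger SA = I_d$, so
\[
a^\top (SA)^\dagger S b - a^\top x^* = a^\top (SA)^\dagger S (b - A x^*) = a^\top (SA)^\dagger S (I - UU^\top) b.
\]
Writing $b' = (I - UU^\top)b$, we have $U^\top b' = 0$, $\|b'\|_2 = \|b - Ax^*\|_2$, and it suffices to bound $|a^\top (SA)^\dagger S b'|$. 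By homogeneity we may also take $a$ to be a unit vector, absorbing $\|a\|_2$ at the end.

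Next I would invoke Claim \ref{claim:series}. To apply it I must verify its hypotheses: that $\|T\|_2 = \|I_d - U^\top S^\top S U\|_2 \leq 1/2$ and that $SA$ has linearly independent columns. Both follow from the fact that the SRHT with $m = \Omega(d \log d / \eps_0^2)$ rows is a $(1\pm\eps_0)$ $\ell_2$-subspace embedding for the column space of $U$ (equivalently of $A$) with probability $1 - 1/\poly(d)$; taking $\eps_0$ a small constant gives $\|T\|_2 \leq 1/2$ and full column rank of $SU$, hence of $SA$. Our choice $m = O(d^{1+\alpha}/\eps^2)$ is comfortably large enough for this. Then Claim \ref{claim:series} gives
\[
a^\top (SA)^\dagger S b' = a^\top V \Sigma^{-1} \Bigl(\sum_{k=0}^\infty T^k\Bigr) U^\top S^\top S b'.
\]
Let $a' = \Sigma^{-1} V^\top a$, so $\|a'\|_2 \leq \|\Sigma^{-1}\|_2 \|a\|_2 = \|\Sigma^{-1}\|_2$ (using that $V$ has orthonormal columns and $a$ is a unit vector). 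Then the quantity to bound is $\bigl|\sum_{k=0}^\infty (a')^\top T^k U^\top S^\top S b'\bigr|$. Since $T = I_d - U^\top S^\top S U$, expanding $T^k = (I - U^\top S^\top S U)^k$ binomially, each resulting term is of the form $\pm (a')^\top (U^\top S^\top S U)^{j} U^\top S^\top S b'$ — but more cleanly, I would observe that $(a')^\top T^k U^\top S^\top S b'$ is exactly a signed combination of expressions $\tilde a^\top S^\top S (U U^\top S^\top S)^{j} b'$ with $0 \le j \le k$, where $\tilde a$ is the unit vector $U a' / \|a'\|_2$ scaled by $\|a'\|_2$. Actually the cleanest route: write $(a')^\top T^k U^\top S^\top S b' = \sum$ over the $2^k$ binomial terms, each $\pm\, (a')^\top (U^\top S^\top S U)^{i_1} \cdots$; but since consecutive $U U^\top$ blocks collapse, every term reduces to the shape $c \cdot \hat a^\top S^\top S (UU^\top S^\top S)^{\ell} b'$ for some $\ell \le k$ and unit vector $\hat a$, with $|c| \le \|a'\|_2$. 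So Lemma \ref{l:bound_T_k} applied with this $\hat a$, this $b'$, and exponent $\ell$ bounds each such term by $O(\|b'\|_2 \eps/\sqrt d)$, provided $r = m \geq d \log^{2\ell+2} n \log^2(n/\eps)/\eps^2$.

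The main obstacle — and the reason the unusual setting $\alpha = \Theta(\sqrt{\log\log n / \log d})$ appears — is controlling the sum over $k$ of these $2^k$ binomial terms. Each term for a given $k$ costs a factor $\log^{O(k)} n$ from Lemma \ref{l:bound_T_k} (and there are $2^k$ of them, and also $\|T\|_2^k \le 2^{-k}$ helps but does not kill $\log^{O(k)} n$). So the series behaves like $\sum_k (C \log n)^{O(k)} \cdot (\eps/\sqrt d)\|b'\|_2$, which converges only if we truncate at $k = K$ for $K$ roughly $\log d / \log\log n$, and argue the tail $\sum_{k > K} \|T\|_2^k \cdot (\text{trivial bound})$ is negligible using only $\|T\|_2 \le 1/2$ and a crude polynomial-in-$n$ bound on the remaining operator norms (valid since $n \le \poly(d)$). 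For the truncated part, $k \le K$, the requirement $r \gtrsim d \log^{2k+2} n \log^2(n/\eps)/\eps^2$ becomes $r \gtrsim d \log^{O(K)} n \cdot \log^2(n/\eps)/\eps^2 = d^{1+o(1)} \log^2(n/\eps)/\eps^2$ precisely when $\log^{O(K)} n = d^{o(1)}$, i.e. $K \log\log n = o(\log d)$, matching $\alpha = \Theta(\sqrt{\log\log n/\log d})$ after balancing the truncation error against the per-term blowup. Finally, reinstating $\|a'\|_2 \le \|\Sigma^{-1}\|_2 \|a\|_2$ and $\|b'\|_2 = \|b - Ax^*\|_2$, and taking a union bound over the $O(K) = \poly\log$ many applications of Lemma \ref{l:bound_T_k} (each failing with probability $1/\poly(n) = 1/\poly(d)$ since $n \le \poly(d)$) together with the subspace-embedding event, yields the claimed bound with probability $1 - 1/\poly(d)$.
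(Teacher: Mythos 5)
Your overall architecture matches the paper's: reduce to $U^\top b=0$, invoke the subspace embedding property to justify Claim~\ref{claim:series}, expand each $T^k$ binomially into $2^k$ terms of the shape $\hat a^\top S^\top S (UU^\top S^\top S)^{\ell} b'$ handled by Lemma~\ref{l:bound_T_k} (with $m$ boosted by $C^k$ so the $2^k$ terms sum to $O(\eps/\sqrt d)$), truncate the Neumann series at some level $K$, and balance $K$ against the $(\log n)^{O(K)}$ blowup. That part is right and is exactly what the paper does.

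The genuine gap is in your treatment of the tail $\sum_{k>K}$. You propose to bound it ``using only $\|T\|_2 \le 1/2$ and a crude polynomial-in-$n$ bound on the remaining operator norms.'' This cannot work for any subpolynomial truncation level: with $n\le\poly(d)$ the crude factor is $d^{\Theta(1)}$, while $\|T\|_2^{K}\le 2^{-K}=d^{-\Theta(1/\log\log n)}$ for $K\approx \log d/\log\log n$ (and even closer to $1$ for the correct $K=\Theta(\sqrt{\log d/\log\log n})$), so the tail is $d^{\Theta(1)}\cdot d^{-o(1)}\to\infty$, not $O(\eps/\sqrt d)$. To make the tail negligible with only $\|T\|_2\le 1/2$ you would need $K=\Omega(\log d)$, at which point the head's requirement $m\gtrsim d(C\log n)^{2K+2}/\eps^2$ forces $m\ge d^{\Omega(\log\log n)}$, destroying the claimed $m=d^{1+o(1)}/\eps^2$. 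The paper closes this by using the \emph{quantitative} subspace-embedding bound $\|T\|_2\lesssim\gamma=\sqrt{d\log^c n/m}$, which under the constraint $m\gtrsim d^{1+1/k}\log^c n/\eps^2$ gives $\gamma<\eps d^{-1/(2k)}/\log^c n$ --- polynomially small in $d$ --- combined with an approximate-matrix-product bound $\|U^\top S^\top Sb\|_2\lesssim(\log^c d)\sqrt{d/m}\,\|b\|_2$, so that the $k$-th tail term is already $O(\eps\|a\|_2\|\Sigma^{-1}\|_2\|b\|_2/\sqrt d)$ and the geometric decay does the rest. Relatedly, your stated truncation point $K\approx\log d/\log\log n$ violates your own condition $K\log\log n=o(\log d)$ (it gives $(\log n)^{O(K)}=d^{\Theta(1)}$, not $d^{o(1)}$); the correct balance, which you gesture at but do not carry out, is $K=\Theta(\sqrt{\log d/\log\log n})$, matching $\alpha=1/K$.
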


\begin{proof}
Define $\Delta = \Theta \left (\frac{1}{\sqrt{m}} \right ) (\log^c d)\norm{a}_2 \norm{b - Ax^*}_2 \norm{\Sigma^{-1}}_2 .$
For a constant $c>0$, we have that $S$ is a $(1\pm\gamma)$-subspace embedding (Definition~\ref{def:subspace_embedding}) for $\gamma =
\sqrt{\frac{d\log^c n}{m}}$ with probability $1-1/\mathrm{poly}(d)$
(see, e.g., Theorem 2.4 of \cite{woo14} and references therein), so
$\norm{SUx}_2 = (1\pm \gamma)\norm{Ux}_2$ for all $x$, which we condition
on. Hence for $T = I_d - U^\top  S^\top  S U$, we have $\norm{T}_2 \leq
(1+\gamma)^2-1 \lesssim \gamma$. In particular, $\norm{T}_2 < 1/2$ and
we can apply Claim \ref{claim:series}.

As in Section~\ref{sec:gaussians}, $SA$ has full column rank if $S$ is
a subspace embedding, so $(SA)^\dagger SA
= I$ and we may assume $x^* = 0$ without loss of generality.

By the approximate matrix product (Definition \ref{def:approximate_matrix_product}), 
we have for
some $c$ that 
\begin{align}
  |a^\top V\Sigma^{-1} U^\top  S^\top S b| \leq
  \frac{\log^c d}{\sqrt{m}} \norm{a}_2 \norm{b}_2 \norm{\Sigma^{-1}}_2 \leq \Delta\label{eq:k0}
\end{align}
with $1-1/\mathrm{poly}(d)$ probability.  Suppose this event occurs, bounding the $k=0$ term
of~\eqref{eq:powerseries}. Hence it suffices to show that the $k \geq
1$ terms of~\eqref{eq:powerseries} are bounded by $\Delta$.

By approximate matrix product (Definition~\ref{def:approximate_matrix_product}), we also have with $1-1/d^2$ probability that
\[
\norm{U^\top S^\top Sb}_F \leq \frac{\log^c d}{\sqrt{m}} \norm{U^\top }_F \norm{b}_2 \leq  \frac{\log^c d \sqrt{d}}{\sqrt{m}} \norm{b}_2.
\]
Combining with $\norm{T}_2 \lesssim \gamma$ we have for any $k$ that
\[
|a^\top V\Sigma^{-1}T^k U^\top S^\top Sb| \lesssim \gamma^k (\log^c d) \frac{\sqrt{d}}{\sqrt{m}} \norm{a}_2 \norm{\Sigma^{-1}}_2 \norm{b}_2.
\]
Since this decays exponentially in $k$ at a rate of $\gamma < 1/2$,
the sum of all terms greater than $k$ is bounded by the $k$th term.
As long as
\begin{align}
  m \gtrsim \frac{1}{\eps^2}d^{1 + \frac{1}{k}} \log^c n,\label{eq:2}
\end{align}
we have $\gamma = \sqrt{\frac{d\log^c n}{m}} < \eps d^{-1/(2k)} / \log^c n$, so that
\[
\sum_{k' \geq k} |a^\top V\Sigma^{-1}T^{k'} U^\top S^\top Sb| \lesssim \frac{\eps}{\sqrt{d}} \norm{a}_2 \norm{\Sigma^{-1}}_2 \norm{b}_2.
\]

On the other hand, by Lemma~\ref{l:bound_T_k}, increasing $m$ by a $C^k$ factor, we have for all $k$ that

\[
|a^\top  V^\top  \Sigma^{-1} U^\top  S^\top  S  ( UU^\top  S^\top  S)^k b| \lesssim \frac{1}{2^k} \frac{\eps}{\sqrt{d}}\norm{a}_2 \norm{b}_2 \norm{\Sigma^{-1}}_2
\]
with probability at least $1 - 1/\poly(d)$, as long as $m \gtrsim d
\log^{2k+2}n\log^2(d/\eps) /\eps^2$.  Since the $T^k$ term can be
expanded as a sum of $2^k$ terms of this form, we get that
\[
\sum_{k'=1}^k |a^\top V\Sigma^{-1}T^k U^\top S^\top Sb| \lesssim \frac{\eps}{\sqrt{d}}\norm{a}_2 \norm{b}_2 \norm{\Sigma^{-1}}_2
\]
with probability at least $1 - 1/\poly(d)$, as long as $m \gtrsim d
(C\log n)^{2k+2}\log^2(d/\eps) /\eps^2$ for a sufficiently large constant $C$.
Combining with~\eqref{eq:2}, the result holds as long as
\[
m \gtrsim \frac{d \log^c n}{\eps^2}\max((C\log n)^{2k+2}, d^{\frac{1}{k}})
\]
for any $k$.  Setting $k = \Theta(\sqrt{\frac{\log d}{\log \log n}})$ gives the result.
\end{proof}

{\bf{Combining Different Matrices}.} In some cases it can make sense to combine different matrices that
satisfy the generalization bound.

\define{Theorem}{thm:combine}{
  Let $A \in \R^{n \times d}$, and let $R \in \R^{m \times r}$ and
  $S \in \R^{r \times n}$ be drawn from distributions of matrices that
  are $\eps$-approximate OSEs and satisfy the generalization
  bound~\eqref{eq:ellinf}.  Then $RS$ satisfies the generalization
  bound with a constant factor loss in failure probability and
  approximation factor.
}
\state{thm:combine}
We defer the details to Appendix~\ref{sec:combining}.

\section{Proof of Lemma~\ref{l:bound_T_k}}

\begin{proof}
  Each column $S_i$ of the subsampled randomized Hadamard transform has the same distribution
  as $\sigma_i S_i$, where $\sigma_i$ is a random sign. It also has
  $\inner{S_i, S_i} = 1$ for all $i$ and $\abs{\inner{S_i, S_j}}
  \lesssim \frac{\sqrt{\log (1/\delta)}}{\sqrt{r}}$ with probability
  $1-\delta$, for any $\delta$ and $i \neq j$. 
  See, e.g., \cite{ldfu13}. 

By expanding the following product into a sum, and rearranging terms, we obtain
{
\begin{align*}
& a^\top S^\top S  ( U U^\top S^\top S)^k b \\ 
 = &\sum_{i_0,j_0, i_1, j_1, \cdots, i_k,j_k}a_{i_0} b_{j_k} \sigma_{i_0} \sigma_{i_1} \cdots \sigma_{i_k} \sigma_{j_0} \sigma_{j_1} \cdots \sigma_{j_k} \\
\cdot & \langle S_{i_0}, S_{j_0} \rangle (UU^\top )_{j_0,i_1}  \langle S_{i_1}, S_{j_1} \rangle \cdots (UU^\top )_{j_{k-1},i_k}  \langle S_{i_k}, S_{j_k} \rangle\\
= & \sum_{i_0,j_k} a_{i_0} b_{j_k} \sigma_{i_0} \sigma_{j_k} \sum_{j_0, i_1, j_1, \cdots, i_k} \sigma_{i_1} \cdots \sigma_{i_k} \sigma_{j_0} \sigma_{j_1} \cdots \sigma_{j_{k-1}} \\
& \cdot   \langle S_{i_0}, S_{j_0} \rangle (UU^\top )_{j_0,i_1}  \langle S_{i_1}, S_{j_1} \rangle \cdots (UU^\top )_{j_{k-1},i_k}  \langle S_{i_k}, S_{j_k} \rangle \\
 = & \sum_{i_0,j_k} \sigma_{i_0} \sigma_{j_k} Z_{i_0,j_k}
\end{align*}
}
where $Z_{i_0,j_k}$ is defined to be
{
\begin{eqnarray*}
Z_{i_0,j_k} &=& a_{i_0} b_{j_k} \sum_{ \substack{ i_1,\cdots i_k \\ j_0, \cdots j_{k-1} }} \prod_{c=1}^{k} \sigma_{i_c} \prod_{c=0}^{k-1} \sigma_{j_c}  \cdot \prod_{c=0}^k \langle S_{i_c}, S_{j_c}\rangle \prod_{c=1}^k (UU^\top )_{i_{c-1},j_c}
\end{eqnarray*}
}
Note that $Z_{i_0,j_k}$ is independent of $\sigma_{i_0}$ and $\sigma_{j_k}$. We observe that in the above expression if $i_0 = j_0$, $i_1 = j_1$, $\cdots$, $i_k = j_k$, 
then the sum over these indices equals $a^\top  (UU^\top )\cdots (UU^\top ) b =0$, since
$\langle S_{i_c}, S_{j_c} \rangle = 1$ in this case for all $c$. 
Moreover, the sum over all indices conditioned on $i_k = j_k$ is equal to $0$. 
Indeed, in this case, the expression can be factored into the form $\zeta \cdot U^\top  b$, 
for some random variable $\zeta$, but $U^\top  b = 0$. 

Let $W$ be a matrix with $W_{i,j} = \sigma_i \sigma_j Z_{i,j}$. We
need Khintchine's inequality:
\begin{fact}[Khintchine's Inequality] \label{fact:Khintchine}
Let $\sigma_1, \ldots, \sigma_n$ be i.i.d. sign random variables, and let $z_1, 
\ldots, z_n$ be real numbers. Then there are constants $C, C' > 0$ so that
\begin{align*}
\Pr \left[ \left|\sum_{i=1}^n z_i \sigma_i \right| \geq Ct \|z\|_2 \right] \leq e^{-C't^2}.
\end{align*}
\end{fact}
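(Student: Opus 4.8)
The plan is to establish this as the standard sub-Gaussian tail bound for Rademacher chaos of degree one, via the moment generating function (Chernoff) method. By homogeneity we may rescale the $z_i$ so that $\|z\|_2 = 1$; it then suffices to show that $\Pr[|X| \ge Ct] \le e^{-C't^2}$, where $X = \sum_{i=1}^n z_i \sigma_i$ and $\sum_i z_i^2 = 1$. I would also note at the outset that some mild restriction such as $t \ge 1$ is needed for a bound of this exact shape (the example $z = e_1$ gives $\Pr[|X| \ge Ct] = 1$ once $Ct \le 1$), which is harmless since the inequality is only ever applied with $t$ bounded away from $0$.

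The one elementary inequality the argument rests on is $\cosh(x) \le e^{x^2/2}$ for all real $x$, which I would prove by comparing Taylor series term by term: $\cosh x = \sum_{k \ge 0} x^{2k}/(2k)!$ while $e^{x^2/2} = \sum_{k \ge 0} x^{2k}/(2^k k!)$, and $(2k)! \ge 2^k k!$ for every $k$ (the ``extra'' factors $(k+1)(k+2)\cdots(2k)$ are $k$ terms each at least $2$). Since the $\sigma_i$ are independent and $\E[e^{\lambda z_i \sigma_i}] = \cosh(\lambda z_i)$, for every $\lambda > 0$ we get
\[
\E[e^{\lambda X}] = \prod_{i=1}^n \cosh(\lambda z_i) \le \prod_{i=1}^n e^{\lambda^2 z_i^2/2} = e^{\lambda^2 \|z\|_2^2 / 2} = e^{\lambda^2/2}.
\]
Markov's inequality applied to $e^{\lambda X}$ then gives $\Pr[X \ge s] \le e^{-\lambda s}\,\E[e^{\lambda X}] \le e^{-\lambda s + \lambda^2/2}$, and optimizing at $\lambda = s$ yields $\Pr[X \ge s] \le e^{-s^2/2}$. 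The law of $X$ is symmetric (flip every $\sigma_i$), so the same bound holds for $\Pr[-X \ge s]$, and a union bound gives $\Pr[|X| \ge s] \le 2 e^{-s^2/2}$.

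Finally I would set $s = Ct$, so $\Pr[|X| \ge Ct] \le 2 e^{-C^2 t^2/2}$, and pick the constants to absorb the factor $2$: with, say, $C = 2$ and $C' = 1$ one has $2 e^{-C^2 t^2/2} = 2 e^{-(C'+1)t^2} \le e^{-C' t^2}$ whenever $t^2 \ge \ln 2$, hence for all $t \ge 1$; for any smaller fixed threshold on $t$ one simply decreases $C'$. The proof has no genuine obstacle — it is entirely standard — so the only points that require care are the coefficientwise comparison giving $\cosh x \le e^{x^2/2}$ and the bookkeeping of $C, C'$ together with the (unavoidable) restriction that $t$ be bounded below by an absolute constant.
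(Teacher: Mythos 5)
Your proof is correct and complete: the Chernoff argument via $\E[e^{\lambda z_i \sigma_i}] = \cosh(\lambda z_i) \le e^{\lambda^2 z_i^2/2}$ gives $\Pr[|X| \ge s] \le 2e^{-s^2/2}$ when $\|z\|_2 = 1$, and your bookkeeping of the factor $2$ and of the small-$t$ regime is fine. Note, though, that the paper does not actually prove this fact; it states it as known and only remarks that the tail bound ``follows readily by applying a Markov bound to the high moments,'' i.e.\ it points to the classical moment form of Khintchine's inequality, $\E\bigl|\sum_i z_i\sigma_i\bigr|^p \lesssim p^{p/2}\|z\|_2^p$, followed by Markov's inequality with $p$ of order $t^2$. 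Your route through the exponential moment (Hoeffding's lemma) is therefore a genuinely different, and arguably preferable, derivation: it is fully self-contained, with the only ingredient being the coefficientwise comparison $\cosh x \le e^{x^2/2}$, whereas the paper's suggested route presupposes the moment inequality as a black box. Your caveat that the bound as literally stated fails for $t$ near $0$ (take $z = e_1$) is a fair observation about the constants in the Fact; it is harmless in context, since the paper only invokes the inequality with $t$ of order $\sqrt{\log n}$, exactly as you anticipate.
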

We note that Khintchine's inequality sometimes refers to bounds on the moment of
$|\sum_i z_i \sigma_i|$, though the above inequality follows readily by applying a Markov
bound to the high moments. 

We apply Fact \ref{fact:Khintchine} to each column of $W$, so that if $W_i$ is the $i$-th column, we have by a union bound that with probability $1-1/\mathrm{poly}(n)$, $\| W_i\|_2 = O( \| Z_i \|_2 \sqrt{\log n})$ simultaneously for all columns $i$. It follows that with the same probability, $\| W\|_F^2 = O(\|Z\|_F^2 \log n)$, that is, $\| W\|_F = O(\| Z\|_F \sqrt{\log n})$. We condition
on this event in the remainder. 

Thus, it remains to bound $\| Z\|_F$. By squaring $Z_{i_0, j_0}$ and
using that ${\bf E}[\sigma_i \sigma_j] = 1$ if $i = j$ and $0$
otherwise, we have, {
\begin{eqnarray}\label{eqn:Zone}
\underset{\sigma}{\bf E} [ Z_{i_0,j_k}^2 ] = a_{i_0}^2 b_{j_k}^2 \sum_{ \substack{i_1,\cdots i_k \\ j_0, \cdots j_{k-1}} } \prod_{c=0}^k \langle S_{i_c}, S_{j_c}\rangle^2 \prod_{c=1}^k (UU^\top )_{i_{c-1},j_c}^2
\end{eqnarray}
}
We defer to Appendix \ref{sec:Z}
the proof that
{
\begin{align*} \underset{S}{\bf E} [\| Z\|_F^2 ] &\leq \left( O(d(\log n)/r) + 1
  \right)^{k-1}  \cdot  (d\|b\|_2^2 (\log^2 n)/r^2 + \| b\|_2^2 (\log n)/r)
\end{align*}
}
Note that we also have the bound:
{
\begin{align*}
(O(d(\log n) /r) + 1)^{k-1} &\leq ( e^{O( d(\log n) /r)} )^{k-1} \leq e^{O(kd(\log n) /r)}  \leq O(1) 
\end{align*}
}
for any $r = \Omega(kd\log n)$.

Having computed the expectation of $\|Z\|_F^2$, we now would like to show concentration. Consider a specific 
{
\begin{eqnarray*}
Z_{i_0,j_k} = a_{i_0} b_{j_k} \sum_{i_k} \sigma_{i_k} \langle S_{i_k}, S_{j_k} \rangle \cdots \sum_{j_1} \sigma_{j_1} (UU^\top )_{j_1,i_2} \sum_{i_1} \sigma_{i_1} \langle S_{i_1}, S_{j_1}\rangle  \sum_{j_0} \sigma_{j_0} \langle S_{i_0}, S_{j_0} \rangle (UU^\top )_{j_0, i_1}.
\end{eqnarray*}
}
By Fact \ref{fact:Khintchine}, 
for each fixing of $i_1$, with probability $1-1/\mathrm{poly}(n)$, we have
{
\begin{align}\label{eqn:induct1}
&\sum_{j_0} \sigma_{j_0} \langle S_{i_0}, S_{j_0} \rangle (UU^\top )_{j_0, i_1} = O(\sqrt{\log n}) \left(\sum_{j_0}  \langle S_{i_0}, S_{j_0} \rangle^2 (UU^\top )_{j_0, i_1}^2 \right)^{\frac{1}{2}}.
\end{align}
}
Now, we can apply Khintchine's inequality for each fixing of $j_1$, and combine
this with (\ref{eqn:induct1}). 
With probability $1-1/\mathrm{poly}(n)$, again we have
{
\begin{eqnarray}
& &\sum_{i_1} \sigma_{i_1} \langle S_{i_1}, S_{j_1}\rangle \sum_{j_0} \sigma_{j_0} \langle S_{i_0}, S_{j_0} \rangle (UU^\top )_{j_0, i_1} \notag \\
& =& \sum_{i_1} \sigma_{i_1} \langle S_{i_1}, S_{j_1}\rangle O(\sqrt{\log n}) \left(\sum_{j_0}  \langle S_{i_0}, S_{j_0} \rangle^2 (UU^\top )_{j_0, i_1}^2  \right)^{\frac{1}{2}} \notag \\
& =&O(\log n )  \left( \sum_{i_1}  \langle S_{i_1}, S_{j_1}\rangle^2   \sum_{j_0}  \langle S_{i_0}, S_{j_0} \rangle^2 (UU^\top )_{j_0, i_1}^2  \right)^{\frac{1}{2}} \notag 
\end{eqnarray}
}
Thus, we can apply
Khintchine's inequality recursively over all the $2k$ indexes $j_0, i_1, j_1, \cdots ,j_{k-1}, i_k$, from which it follows that with probability $1-1/\mathrm{poly}(n)$, for each
such $i_0, j_k$, we have $Z_{i_0,j_k}^2 = O(\log^k n) \underset{S}{\bf E}[Z_{i_0, j_k}^2]$, using
(\ref{eqn:Zone}). We thus have with this probability, that 
$\|Z\|_F^2 = O(\log^k n) \underset{S}{\bf E}[\|Z\|_F^2],$
completing the proof.  
\end{proof}

\section{Lower bound for $\ell_2$ and $\ell_{\infty}$ guarantee}\label{sec:lower_bound}

We prove a lower bound for the $\ell_2$ guarantee, which immediately
implies a lower bound for the $\ell_{\infty}$ guarantee.

\begin{definition}
  Given a matrix $A\in \mathbb{R}^{n\times d}$, vector $b\in
  \mathbb{R}^{n}$ and matrix $S\in \mathbb{R}^{r\times n}$, denote
  $x^*=A^\dagger b$. We say that an algorithm ${\cal A}(A,b,S)$ that
  outputs a vector $x'=(SA)^\dagger S b$ ``succeeds'' if the following
  property holds:
$\| x' - x^* \|_2 \lesssim  \eps  \| b\|_2 \cdot \| A^\dagger \|_2 \cdot \| Ax^*-b \|_2.$
\end{definition}

\define{Theorem}{thm:l2_lower_bound}{
Suppose $\Pi$ is a distribution over $\mathbb{R}^{m\times n}$ with the property that for any $A\in \mathbb{R}^{n\times d}$ and $b\in \mathbb{R}^{n}$,
$\underset{S\sim \Pi}{ \Pr } [ {\cal A}(A,b,S) \mathrm{~succeeds~} ] \geq 19/20.$
Then $m \gtrsim \min(n,d/\eps^2)$.
}
\state{thm:l2_lower_bound}

\begin{proof}
The proof uses Yao's minimax principle. Let ${\cal D}$ be an arbitrary distribution over $\mathbb{R}^{n\times (d+1)}$, then
$
\underset{ (A,b) \sim {\cal D} }{ \mathbb{E} } ~\underset{ S \sim \Pi }{ \mathbb{E} } [ {\cal A}(A,b,S) \mathrm{~succeeds~} ] \geq 1-\delta.
$
Switching the order of probabilistic quantifiers, an averaging argument implies
the existence of a fixed matrix $S_0 \in \mathbb{R}^{m\times n}$ such that
\begin{align*}
\underset{ (A,b) \sim {\cal D} }{ \mathbb{E} }  [ {\cal A}(A,b,S_0) \mathrm{~succeeds~} ] \geq 1-\delta.
\end{align*}
Thus, we must construct a distribution ${\cal D}_{\hard}$ such that 
\begin{align*}
\underset{ (A,b) \sim {\cal D}_{\hard} }{ \mathbb{E} }  [ {\cal A}(A,b,S_0) \mathrm{~succeeds~} ] \geq 1- \delta,
\end{align*}
cannot hold for any $\Pi_0 \in \mathbb{R}^{m\times n}$ which does not satisfy $m=\Omega(d/\eps^2)$.
The proof can be split into three parts. 
 First, we prove a useful property. Second, we prove a lower bound for the case $\rank(S) \geq d$. Third, we show why $\rank(S)\geq d$ is necessary.

(\RN{1})
We show that  $[SA,Sb]$ are independent Gaussian, if both $[A,b]$ and $S$ are orthonormal matrices. We can rewrite $SA$ in the following sense,
\begin{eqnarray}\label{eq:orthonormal_S_and_A_looks_Gaussian}
\underbrace{S}_{m \times n} \cdot \underbrace{A}_{n \times d} = \underbrace{S}_{m \times n} \underbrace{ R}_{n\times n} \underbrace{R^\top}_{n\times n} \underbrace{A}_{n\times d}  =  S  \begin{bmatrix} S^\top & \overline{S}^\top \end{bmatrix} \begin{bmatrix} S \\ \overline{S} \end{bmatrix} A \notag = \begin{bmatrix} I_m & 0 \end{bmatrix}  \begin{bmatrix} S \\ \overline{S} \end{bmatrix} A  = \begin{bmatrix} I_m & 0 \end{bmatrix} \underbrace{\wt{A}}_{n \times d} =  \underbrace{\wt{A}_m}_{m \times d}   
\end{eqnarray}
where $\ov{S}$ is the complement of the orthonormal basis $S$, $I_m$ is a $m\times m$ identity matrix, and $\wt{A}_m$ is the left $m\times d$ submatrix of $\wt{A}$. Thus, using \cite{J06} as long as $m = o(\sqrt{n})$ (because of $n=\Omega(d^3)$) 
 the total variation distance between $[SA, Sb]$ and a random Gaussian matrix is small, i.e., 
\begin{equation}\label{eq:total_variation_distance}
D_{TV}( [SA,Sb], H) \leq 0.01
\end{equation}
where each entry of $H$ is i.i.d. Gaussian ${\cal N}(0,1/n)$.

(\RN{2}) 
Here we prove the theorem in the case when $S$ has rank $r\geq d$ (we will prove this is necessary in part \RN{3}. Writing $S=U \Sigma V^\top$ in its SVD, we have 
\begin{equation}\label{eq:SA_is_USigmaG}
\underbrace{S}_{m \times n}A = \underbrace{U}_{m\times r} \underbrace{\Sigma}_{r\times r} \underbrace{V^\top}_{r\times n} R R^\top A = U \Sigma G
\end{equation}
where $R=\begin{bmatrix} V & \ov{V} ~\end{bmatrix}$. By a similar argument in Equation (\ref{eq:orthonormal_S_and_A_looks_Gaussian}), as long as $r=o(\sqrt{n})$ we have that $G$ also can be approximated by a Gaussian matrix, where each entry is sampled from i.i.d. ${\cal N}(0,1/n)$. Similarly, $Sb = U \Sigma h$, where $h$ also can be approximated by a Gaussian matrix, where each entry is sampled from i.i.d. ${\cal N}(0,1/n)$.

Since $U$ has linearly independent columns, $(U\Sigma G)^\dagger U \Sigma h= (\Sigma G)^\dagger U^\top U \Sigma h = (\Sigma G)^\dagger \Sigma h$.

The $r\times d$ matrix $G$ has ${\it SVD}$ $G= \underbrace{R}_{r\times d} \underbrace{\wt{\Sigma}}_{d\times d} \underbrace{T}_{d\times d}$, and applying the pseudo-inverse property again, we have
\begin{align*}
\| (SA)^\dagger Sb\|_2 = & ~ \| (\Sigma G)^\dagger \Sigma h\|_2 = \| (\Sigma R \wt{\Sigma} T)^\dagger \Sigma h\|_2 = \| T^\dagger (\Sigma R \wt{\Sigma})^\dagger \Sigma h\|_2 = \|  (\Sigma R \wt{\Sigma} )^\dagger \Sigma h\|_2 \\
=  & ~\| \wt{\Sigma}^\dagger (\Sigma R  )^\dagger \Sigma h\|_2,
\end{align*}
where the the first equality follows by Equation (\ref{eq:SA_is_USigmaG}), the second equality follows by the {\it SVD} of $G$, the third and fifth equality follow by properties of the pseudo-inverse\footnote{\url{https://en.wikipedia.org/wiki/Moore-Penrose_pseudoinverse}} when $T$ has orthonormal rows and $\widetilde{\Sigma}$ is a diagonal matrix, and the fourth equality follows since $\| T^\dagger\|_2 = 1$ and $T$ is an orthonormal basis.

Because each entry of $G=R\wt{\Sigma}T \in \mathbb{R}^{r\times d}$ is sampled from an i.i.d. Gaussian ${\cal N}(0,1)$, using the result of \cite{V10} we can give an upper bound for the maximum singular value of $G$: $\| \wt{\Sigma} \| \lesssim \sqrt{\frac{r}{n}}$ with probability at least $.99$. Thus,
\begin{align*}
  \| \wt{\Sigma }^\dagger (\Sigma R)^\dagger \Sigma h\|_2 \geq  \sigma_{\min} (\wt{\Sigma }^\dagger) \cdot \| (\Sigma R)^\dagger \Sigma h\|_2 =  \frac{1}{\sigma_{\max} (\wt{\Sigma})} \| (\Sigma R)^\dagger \Sigma h\|_2 \gtrsim  \sqrt{n/r} \| (\Sigma R)^\dagger \Sigma h\|_2.
\end{align*}
Because $h$ is a random Gaussian vector which is independent of $(\Sigma R)^\dagger \Sigma$, by Claim \ref{cla:EAg_is_fnorm_A},
$\E_h [ \| (\Sigma R)^\dagger \Sigma h \|_2^2 ] = \frac{1}{n} \cdot \| (\Sigma R)^\dagger \Sigma \|_F^2,$
where each entry of $h$ is sampled from i.i.d. Gaussian ${\cal N}(0,1/n)$.
Then, using the Pythagorean~Theorem,
\begin{align*}
\| (\Sigma R)^\dagger \Sigma \|_F^2 = & ~ \| (\Sigma R)^\dagger \Sigma R R^\top \|_F^2 + \| (\Sigma R)^\dagger \Sigma (I-R R^\top ) \|_F^2 \\
\geq & ~ \| (\Sigma R)^\dagger \Sigma R R^\top \|_F^2 \\
=  & ~ \| (\Sigma R)^\dagger \Sigma R \|_F^2  \\
=  & ~  \rank(\Sigma R) \\
=  & ~ \rank(SA) \\
= & ~ d.
\end{align*}
Thus, $\| x' -x^*\|_2 \gtrsim \sqrt{d/r} \geq \sqrt{d/m}=\eps$.

(\RN{3}) Now we show that we can assume that $\rank(S)\geq d$.

We sample $A,b$ based on the following distribution ${\cal D}_{\hard}$: with probability $1/2$, $A,b$ are sampled from ${\cal D}_1$; with probability $1/2$, $A,b$ are sampled from ${\cal D}_2$. In distribution ${\cal D}_1$, $A$ is a random orthonormal basis and $d$ is always orthogonal to $A$. In distribution ${\cal D}_2$, $A$ is a $d\times d$ identity matrix in the top-$d$ rows and $0$s elsewhere, while $b$ is a random unit vector. 
Then, for any $(A,b)$ sampled from ${\cal D}_1$, $S$ needs to work with probability at least $9/10$. Also for any $(A,b)$ sampled from ${\cal D}_2$, $S$ needs to work with probability at least $9/10$. The latter two statements follow
since overall $S$ succeeds on ${\cal D}_{\hard}$ with probability at least $19/20$. 

Consider the case where $A, b$ are sampled from distribution ${\cal D}_2$. Then $x^*=b$ and $\OPT = 0$. Then consider $x'$ which is the optimal solution to $\min_x \| SAx - Sb\|_2^2$, so 
$x' = (SA)^\dagger Sb = (S_L)^\dagger S_L b$, 
where $S$ can be decomposed into two matrices $S_L\in \mathbb{R}^{r\times d}$ and $S_R\in \mathbb{R}^{r\times(n-d)}$, $S = \begin{bmatrix} S_L & S_R \end{bmatrix}$. Plugging $x'$ into the original regression problem,
$\| Ax' - b\|_2^2 = \| A (S_L)^\dagger S_L b - b \|_2^2$,
which is at most $(1+\eps)\OPT=0$. Thus $\rank(S_L)$ is $d$. Since $S_L$ is a submatrix of $S$, the rank of $S$ is also $d$.
%
%
\end{proof}

It remains to define several tools which are used in the main proof of the lower bound.
\begin{claim}\label{cla:EAg_is_fnorm_A}
  For any matrix $A\in \mathbb{R}^{n\times d}$, if each entry of a
  vector $g\in \mathbb{R}^d$ is chosen from an i.i.d Gaussian ${\cal
    N}(0,\sigma^2)$, then $\underset{g}{\E} [\| A g\|_2^2] = \sigma^2 \|
  A\|_F^2$ .
\end{claim}
\begin{proof}
\begin{align*}
\underset{g}{\E} [\| A g\|_2^2] = & ~ \underset{g}{\E} \left[ \sum_{i=1}^n (\sum_{j=1}^d A_{ij} g_j)^2 \right]  \\
= & ~\underset{g}{\E} \left[ \sum_{i=1}^n ( \sum_{j=1}^d A_{ij}^2 g_{j}^2 + \sum_{j\neq j'} A_{ij} A_{ij'} g_j g_{j'} ) \right] \\
= & ~\sum_{i=1}^n \sum_{j=1}^d A_{ij}^2 \sigma^2 \\
= & ~\sigma^2 \| A\|_F^2.
\end{align*}
\end{proof}

Let $g_1, g_2, \cdots, g_t$ be i.i.d. ${\cal N}(0,1)$ random variables. The random variables $\sum_{i=1}^t g_i^2$ are ${\cal X}^2$ with $t$ degree of freedom. Furthermore, the following tail bounds are known.
\begin{fact}[Lemma 1 of \cite{LM00}]\label{fac:kai_squared_distribution}
Let $g_1, g_2, \cdots, g_t$ be i.i.d. ${\cal N}(0,1)$ random variables. Then for any $x\geq 0$,
\begin{align*}
\Pr \left[ \sum_{i=1}^t g_i^2 \geq t+ 2 \sqrt{tx} + 2x \right] \leq \exp(-x),
\end{align*}
and
\begin{align*}
\Pr\left[ \sum_{i=1}^t g_i^2 \leq t- 2 \sqrt{tx}  \right] \leq \exp(-x).
\end{align*}
\end{fact}

\begin{definition}
Given a matrix $A\in \mathbb{R}^{n\times d}$, vector $b\in \mathbb{R}^{n}$ and matrix $S\in \mathbb{R}^{r\times n}$, denote $x^*=A^\dagger b$. We say that an algorithm ${\cal B}(A,b,S)$ that outputs a vector $x'=(SA)^\dagger S b$ ``succeeds'' if the following property holds:
\begin{equation*}
\| x' - x^* \|_{\infty} \lesssim  \frac{\eps}{\sqrt{d}}  \| b\|_2 \cdot \| A^\dagger \|_2 \cdot \| Ax^*-b \|_2.
\end{equation*}
\end{definition}

Applying $\| x'-x\|_{\infty} \geq \frac{1}{\sqrt{d}} \| x'-x\|_2$ to Theorem \ref{thm:l2_lower_bound} ,we obtain the $\ell_{\infty}$ lower bound as a corollary,
\begin{corollary}\label{cor:linf_lower_bound}
Suppose $\Pi$ is a distribution over $\mathbb{R}^{m\times n}$ with the property that for any $A\in \mathbb{R}^{n\times d}$ and $b\in \mathbb{R}^{n}$,
\begin{equation*}
\underset{S\sim \Pi}{ \Pr } [ {\cal B}(A,b,S) \mathrm{~succeeds~} ] \geq 9/10.
\end{equation*}
Then $m \gtrsim \min(n,d/\eps^2)$.
\end{corollary}

\newpage
\bibliographystyle{alpha}
\bibliography{ref}
\newpage
\appendix

\section*{Appendix}

\section{Proof for Gaussian case}\label{app:gaussian}
\begin{lemma}\label{lem:gaussians}
  If the entries of $S \in \mathbb{R}^{m \times n}$ are i.i.d. $N(0,1/m)$, $m = O(d/\eps^2)$, and $U^\top b = 0$, then
  \[
  |a^\top(SA)^\dagger Sb| \lesssim \frac{\eps \sqrt{\log d}}{\sqrt{d}}\norm{a}_2\norm{b}_2\norm{\Sigma^{-1}}_2
  \]
  for any vectors $a, b$ with probability $1-1/\mathrm{poly}(d)$. 
\end{lemma}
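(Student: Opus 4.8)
The plan is to exploit the rotational invariance of the Gaussian distribution. Since $A = U\Sigma V^\top$ and $m \geq d$, the matrix $SA = SU\Sigma V^\top$ has full column rank almost surely, so $(SA)^\dagger SA = I_d$ and, as in Claim~\ref{claim:series}, $(SA)^\dagger S = V\Sigma^{-1}(U^\top S^\top S U)^{-1}U^\top S^\top S$. Writing $G := SU \in \R^{m\times d}$, $y := Sb \in \R^m$, and $w := \Sigma^{-1}V^\top a$ (so $\norm{w}_2 \leq \norm{\Sigma^{-1}}_2\norm{a}_2$), this gives $a^\top (SA)^\dagger Sb = w^\top (G^\top G)^{-1}G^\top y$.

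The key step is to observe that $G$ and $y$ are \emph{independent}. Since $U$ has orthonormal columns and $U^\top b = 0$, the unit vector $b/\norm{b}_2$ is orthogonal to the column space of $U$; completing $[\,U \mid b/\norm{b}_2\,]$ to an orthonormal basis $Q$ of $\R^n$ and noting that $SQ$ again has i.i.d.\ $N(0,1/m)$ entries, we conclude that $G = SU$ and $y = Sb$ are independent, that $G$ has i.i.d.\ $N(0,1/m)$ entries, and that $y \sim N(0, (\norm{b}_2^2/m) I_m)$. This is exactly where both the Gaussianity of $S$ and the hypothesis $U^\top b = 0$ are used, and it is what makes the Gaussian case far easier than the SRHT case (no power series needed).

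Now condition on $G$. Then $w^\top (G^\top G)^{-1}G^\top y = \inner{v, y}$ with $v := G(G^\top G)^{-1}w$, which is a linear functional of the Gaussian vector $y$, hence distributed as $N(0, \norm{v}_2^2\norm{b}_2^2/m)$. A standard Gaussian tail bound gives $|\inner{v,y}| \lesssim \sqrt{\log d}\,\norm{v}_2\norm{b}_2/\sqrt{m}$ with probability $1 - 1/\poly(d)$. Next, $\norm{v}_2^2 = w^\top (G^\top G)^{-1}w \leq \sigma_{\min}(G)^{-2}\norm{w}_2^2$, and since $G$ is an $m\times d$ Gaussian matrix with $m = O(d/\eps^2) \geq 4d$ (taking the hidden constant large enough, or $\eps$ at most a small absolute constant), standard bounds on the extreme singular values of Gaussian matrices (e.g.\ \cite{V10}) give $\sigma_{\min}(G) \geq 1/2$ with probability $1 - e^{-\Omega(d)}$. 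Combining these and using $1/\sqrt{m} \lesssim \eps/\sqrt{d}$,
\[
|a^\top (SA)^\dagger Sb| \;\lesssim\; \sqrt{\log d}\,\frac{\norm{w}_2\norm{b}_2}{\sqrt{m}} \;\lesssim\; \frac{\eps\sqrt{\log d}}{\sqrt{d}}\,\norm{a}_2\norm{b}_2\norm{\Sigma^{-1}}_2,
\]
and a union bound over the two (both $1 - 1/\poly(d)$) failure events finishes the argument.

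\textbf{Main obstacle.} There is no serious technical obstacle; the only point requiring care is the independence claim for $SU$ and $Sb$, which is the conceptual heart of the proof and must be set up before conditioning. The remaining pieces---the Gaussian tail bound and the least-singular-value estimate for a tall Gaussian matrix---are standard, and the only bookkeeping is to ensure all the failure probabilities are $1/\poly(d)$ and that the hidden constant in $m = O(d/\eps^2)$ is large enough to force $\sigma_{\min}(G) = \Omega(1)$.
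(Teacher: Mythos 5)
Your proposal is correct and follows essentially the same route as the paper's proof in Appendix~A: both exploit that $U^\top b=0$ makes $SU$ and $Sb$ independent Gaussians, condition on $SU$ (equivalently on the row vector $a^\top V\Sigma^{-1}(U^\top S^\top SU)^{-1}U^\top S^\top$), apply a one-dimensional Gaussian tail bound, and control the norm of the conditioning vector via the $(1\pm\eps)$ singular-value bound for the tall Gaussian matrix $SU$. The only cosmetic difference is that you bound $\|G(G^\top G)^{-1}w\|_2$ by $\sigma_{\min}(G)^{-1}\|w\|_2$ where the paper uses submultiplicativity on the factored product; these are the same estimate.
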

\begin{proof}
With probability $1$, the matrix $SA$ has linearly independent columns, and so $(SA)^\dagger$ is{
\begin{align*} = & ~(A^\top S^\top S A)^{-1} A^\top S^\top  \\
 = & ~( V \Sigma U^\top S^\top S U \Sigma V^\top)^{-1} V \Sigma U^\top S^\top \\
 = & ~ V \Sigma^{-1} (U^\top S^\top S U)^{-1} \Sigma^{-1} V^\top V \Sigma U^\top S^\top \\
 = & ~ V \Sigma^{-1} (U^\top S^\top S U)^{-1} U^\top S^\top. 
\end{align*}}
Hence, we would like to bound
  \[
  X = a^\top V\Sigma^{-1}(U^\top S^\top SU)^{-1} U^\top S^\top Sb.
  \]
  It is well-known (stated, for example, explicitly in 
  Theorem 2.3 of \cite{woo14}) that with probability
  $1-\exp(-d)$, the singular values of $SU$ are $(1 \pm \eps)$ for $m
  = O(d/\eps^2)$.  We condition on this event.
  It follows that {
\begin{eqnarray*}
 && \|V \Sigma^{-1} (U^\top S^\top S U)^{-1} U^\top S\|_2 \\
& = & \|\Sigma^{-1} (U^\top S^\top S U)^{-1} U^\top S\|_2\\
& \leq & \|\Sigma^{-1}\|_2 \|(U^\top S^\top S U)^{-1}\|_2 \|U^\top S\|_2\\
& \leq & \|\Sigma^{-1}\|_2 \cdot \frac{1}{1-\eps} \cdot (1+\eps)\\
& = & O(\|\Sigma^{-1}\|_2),
\end{eqnarray*}}
where the first equality uses that $V$ is a rotation, the first inequality follows by sub-multiplicativity,
and the second inequality uses that the singular values of $SU$ are in the range $[1 -\eps, 1+\eps]$. 
Hence, with probability $1-\exp(-d)$,
\begin{eqnarray}\label{eqn:oper}
\|a^\top V \Sigma^{-1} (U^\top S^\top SU)^{-1} U^\top S^\top\|_2 = O( \|\Sigma^{-1}\|_2 \|a\|_2).
\end{eqnarray}

The main observation is that since $U^\top b = 0$, $SU$ is statistically independent from $Sb$. 
Hence, $Sb$ is distributed as $N(0, \|b\|_2^2 I_m)$, conditioned on the vector $a^\top V \Sigma^{-1} (U^\top S^\top SU)^{-1} U^\top S^\top$. 
It follows that conditioned on the value of $a^\top V \Sigma^{-1} (U^\top S^\top SU)^{-1} U^\top S^\top$, $X$ is distributed as 
\begin{align*}
N(0, \|b\|_2^2 \| a^\top V \Sigma^{-1} (U^\top S^\top SU)^{-1} U^\top S^\top \|_2^2/m),
\end{align*} 
and so using (\ref{eqn:oper}) , with probability $1-1/\mathrm{poly}(d)$, 
we have $|X| = O(\eps \sqrt{\log d} \|a\|_2 \|b\|_2 \|\Sigma^{-1}\|_2/ \sqrt{d})$.
\end{proof}

\section{Combining Different Matrices}\label{sec:combining}

In some cases it can make sense to combine different matrices that
satisfy the generalization bound.

\restate{thm:combine}

\begin{proof}
  For any vectors $a, b$, and $x^* = A^\dagger b$ we want to show
  \[
  |a^\top (RSA)^\dagger RSb - a^\top x^*| \lesssim \frac{\eps}{\sqrt{d}}\norm{a}_2 \norm{b - Ax^*}_2 \norm{A^\dagger }_2 
  \]
  As before, it suffices to consider the $x^*=0$ case.  We have with
  probability $1-\delta$ that
  \[
  |a^\top (SA)^\dagger Sb| \lesssim \frac{\eps}{\sqrt{d}}\norm{a}_2 \norm{b}_2 \norm{A^\dagger }_2 ;
  \]
  suppose this happens.  We also have by the properties of $R$,
  applied to $SA$ and $Sb$, that
  \[
  |a^\top (RSA)^\dagger RSb - a^\top (SA)^\dagger Sb| \lesssim \frac{\eps}{\sqrt{d}}\norm{a}_2 \norm{Sb}_2 \norm{(SA)^\dagger}_2.
  \]
  Because $S$ is an OSE, we have $\norm{Sb}_2 \leq (1+\eps)$ and
  $\norm{(SA)^\dagger }_2 \gtrsim (1-\eps) \norm{A^\dagger}_2$.  Therefore
  \[
  |a^\top (RSA)^\dagger RSb| \lesssim \frac{\eps}{\sqrt{d}}\norm{a}_2 \norm{b}_2 \norm{A^\dagger}_2
  \]
\end{proof}

We describe a few of the applications of combining sketches.

\subsection{Removing dependence on $n$ via Count-Sketch}

One of the limitations of the previous section is that the choice of
$k$ depends on $n$.  To prove that theorem, we have to assume that
$\log d>\log \log n$. Here, we show an approach to remove that
assumption.

The main idea is instead of applying matrix $S\in \mathbb{R}^{m\times
  n}$ to matrix $A\in \mathbb{R}^{n\times d}$ directly, we pick two
matrices $ S \in \mathbb{R}^{m\times \mathrm{poly}(d) }$ and $C\in
\mathbb{R}^{\mathrm{poly}(d) \times n}$, e.g. $S$ is FastJL matrix and
$C$ is Count-Sketch matrix with $s=1$. We first compute $C\cdot A$,
then compute $S\cdot (CA)$. The benefit of these operations is $S$
only needs to multiply with a matrix $(CA)$ that has
$\mathrm{poly}(d)$ rows, thus the assumption we need is $\log d > \log
\log (\mathrm{poly} (d))$ which is always true. The reason for
choosing $C$ as a Count-Sketch matrix with $s=1$ is: (1)
$\mathrm{nnz}(CA) \leq \mathrm{nnz}(A)$ (2) The running time is
$O(\mathrm{poly}(d)\cdot d + \mathrm{nnz}(A))$.

\subsection{Combining Gaussians and SRHT} By combining Gaussians
with SRHT matrices, we can embed into the optimal dimension $O(d/\eps^2)$
with fast $\Ot(nd\log n + d^\omega/\eps^4)$ embedding time.

\subsection{Combining all three} By taking Gaussians times SRHT times
Count-Sketch, we can embed into the optimal dimension $O(d/\eps^2)$
with fast $O(\text{nnz}(A) + d^4 \text{poly}(\frac{1}{\eps}, \log d))$
embedding time.

\section{Count-Sketch does not obey the $\ell_{\infty}$ guarantee}\label{sec:cs}

Here we demonstrate an $A$ and a $b$ such that Count-Sketch will not
satisfy the $\ell_\infty$ guarantee with constant probability, so such
matrices cannot satisfy the generalization guarantee~\eqref{eq:ellinf}
with high probability.

\define{Theorem}{thm:count_sketch_not_infty}{
  Let $S \in \R^{m \times n}$ be drawn as a Count-Sketch matrix with
  $s$ nonzeros per column.  There exists a matrix $A \in \R^{n \times
    d}$ and $b \in \R^n$ such that, 
  if $s^2 d \lesssim m \lesssim \sqrt{d^3 s}$,
   then the ``true''
  solution $x^* = A^\dagger b$ and the approximation $x' = (SA)^\dagger Sb$ have
  large $\ell_\infty$ distance with constant probability:
  \[
  \norm{x' - x^*}_\infty \gtrsim \sqrt{\frac{d}{ms}} \norm{b}_2.
  \]
  Plugging in $m = d^{1.5}$ and $s = d^{0.25}$ we find that
  \[
  \norm{x' - x^*}_\infty \gtrsim 1/d^{3/8}\norm{b}_2 \gg 1/\sqrt{d} \norm{b}_2,
  \]
  even though such a matrix is an OSE with probability exponential in
  $s$.  Therefore there exists a constant $c$ for which this matrix
  does not satisfy the generalization guarantee~\eqref{eq:ellinf} with
  $1 - \frac{c}{d}$ probability.
}
\state{thm:count_sketch_not_infty}

\begin{proof}
  We choose the matrix $A$ to be the identity on its top $d$ rows:
  $A= \begin{bmatrix}I_d \\ 0 \end{bmatrix}$. Choose some $\alpha \geq
  1$, set the value of the first $d$ coordinates of vector $b$ to be
  $\frac{1}{\sqrt{d}}$ and set the value to be $1 / \sqrt{\alpha}$ for
  the next $\alpha$ coordinates, with the remaining entries all
  zero. Note that $\norm{b}_2 = \sqrt{2}$, $x^* = (1/\sqrt{d}, \dotsc,
  1/\sqrt{d})$, and
$
\norm{Ax^* - b}_2 = 1.
$

Let $S_k$ denote the $k$th column vector of matrix $S \in \mathbb{R}^{m\times n}$. 
We define two events,
Event I, $\forall k'\in [d]$ and $k'\neq k $, we have $\supp(S_{k'}) \cap \supp(S_k) = \emptyset$; Event II, $\exists$ a unique $k'\in \{ d+1, d+2, \cdots, d+\alpha \}$ such that $| \supp(S_{k'}) \cap \supp (S_k) |=1$, and all other $k'$ have $\supp(S_{k'}) \cap \supp(S_k) = \emptyset$. Using Claim \ref{cla:two_events_for_S_k}, with probability at least $.99$ there exists a $k$ for which both events hold.

Given the constructions of $A$ and $b$ described early, it is obvious that {
\[Ax-b = \begin{bmatrix} x_1-\frac{1}{\sqrt{d}}, \cdots, x_d -\frac{1}{\sqrt{d}}, -\frac{1}{\sqrt{\alpha}}, \cdots, -\frac{1}{\sqrt{\alpha}}, 0,\cdots, 0 \end{bmatrix}^\top.\]}

Conditioned on event I and II are holding, then denote $\supp(S_j)= \{i_1, i_2, \cdots, i_s \}$. Consider the terms involving $x_j$ in the quadratic form
\[
\min_x \norm{SAx - Sb}_2^2.
\]
it can be written as $(s-1)(x_j-1/\sqrt{d})^2 + (x_j - 1/\sqrt{d} \pm 1/\sqrt{\alpha})^2$.
  Hence the optimal
$x'$ will have $x'_j = \frac{1}{\sqrt{d}} \pm \frac{1}{s\sqrt{\alpha}}$, which is different
from the desired $1/\sqrt{d}$ by $\frac{1}{s \sqrt{\alpha}} $.  Plugging in our requirement of
 $\alpha \eqsim m^2/(s^3d^2)$, we have  
\[ 
\norm{x' - x^*}_\infty \geq  \frac{1}{s\sqrt{\alpha}}   \gtrsim c\sqrt{\frac{sd^2}{m^2}}  \gtrsim \frac{1}{\sqrt{d}}
\]
where the last inequality follows by $m \lesssim \sqrt{sd^3}$. Thus, we get the result. 
\end{proof}

\begin{figure}[!t]
  \centering
    \includegraphics[width=0.9\textwidth]{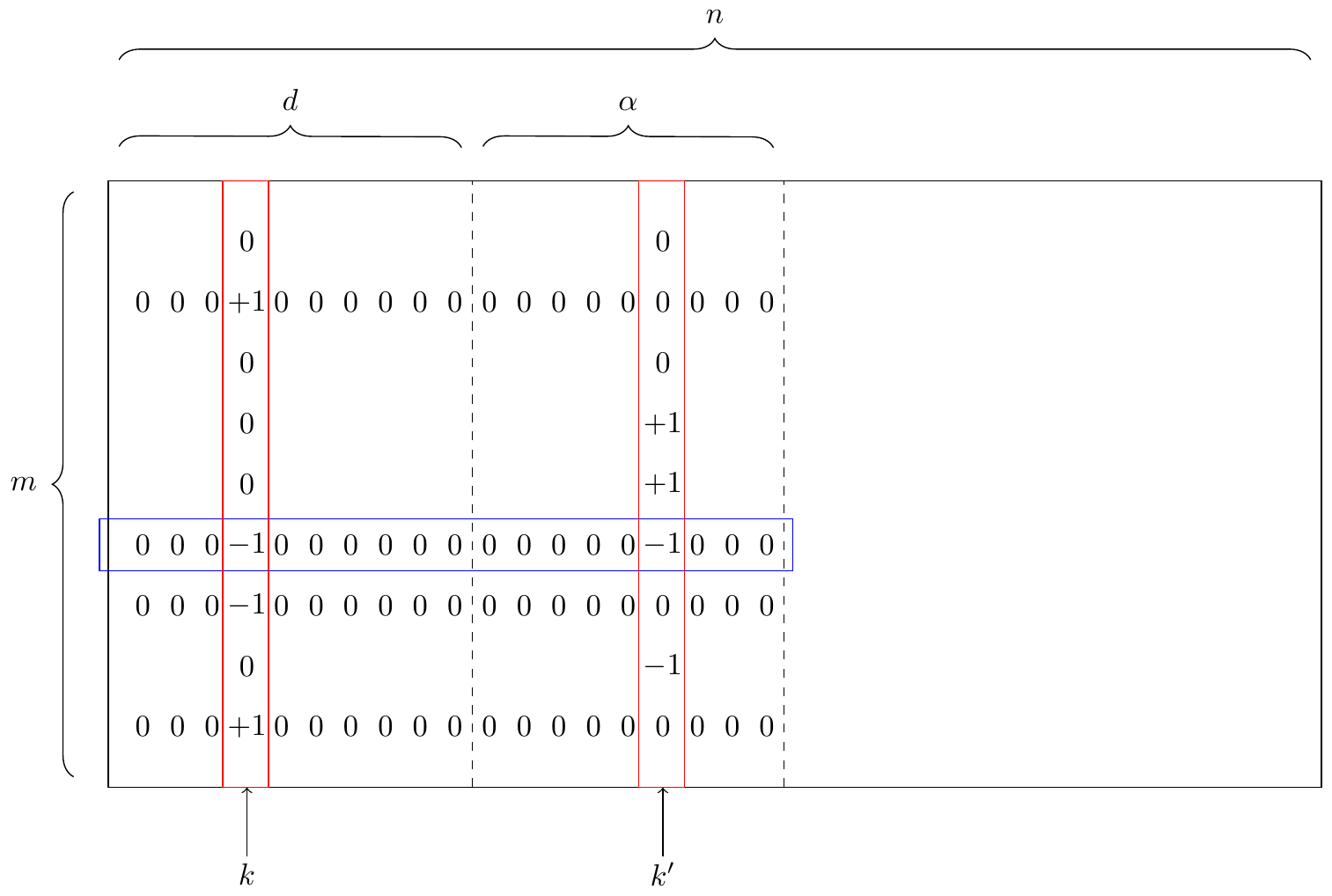}
    \caption{Count Sketch matrix $S\in\mathbb{R}^{m\times n}$. Event I, for any $k'\in [d]$ and $k'\neq k$, $\supp(S_k) \cap \supp(S_{k'}) = \emptyset$. Event II, there exists a unique $k' \in \{d+1, d+2, \cdots, d+\alpha\}$ such that $S_k$ and $S_{k'}$ intersect at exactly one location(row index). }\label{fig:S_count_sketch_not_linf}
\end{figure}

\begin{claim}\label{cla:two_events_for_S_k}
  If $ m = \Omega(s^2 d)$, $m = o(d^2)$, $\alpha <d$, and $\alpha =
  O(\frac{m^2}{s^3 d^2})$, with probability at least $.99$
  there exists a $k \in [d]$ for which both event I and II hold.
\end{claim}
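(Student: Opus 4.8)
The plan is to fix the supports of the $\alpha$ ``extra'' columns $S_{d+1},\dots,S_{d+\alpha}$ and then run a second-moment argument over the remaining columns $S_1,\dots,S_d$, which are i.i.d.\ uniform random $s$-subsets of $[m]$ and are independent of everything defining Events~I and~II beyond $S_k$ itself. Write $c_i$ for the number of extra columns whose support contains row $i$, so $\sum_{i\in[m]}c_i=\alpha s$. The key reformulation is that for a column $S_k$ with $\supp(S_k)=T$, Event~II is \emph{equivalent} to $\sum_{i\in T}c_i=1$ (if the total number of incidences between $T$ and the extra columns is one, there is a unique extra column meeting $T$, in a unique row, and all others are disjoint from $T$). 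Since the hypotheses give $\alpha s/m\asymp m/(s^2d^2)=o(1)$ and $\alpha s^2/m\asymp m/(sd^2)=o(1)$ (using $m=o(d^2)$), a Markov bound shows that with probability at least $0.999$ the extra-column configuration is ``spread out'', meaning $\sum_i\binom{c_i}{2}\lesssim\alpha^2s^2/m$; I would condition on such a configuration $\mathcal E$. For fixed spread-out $\mathcal E$ one then gets, averaging over a fresh random $s$-set $T$, that $p:=\Pr_T[\sum_{i\in T}c_i=1]=(1\pm o(1))\,\alpha s^2/m$, because the mean $\E_T[\sum_{i\in T}c_i]=\alpha s^2/m$ is exact while the spread-out bound controls $\Pr_T[\sum_{i\in T}c_i\ge2]$.

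Next I would set $N=\#\{k\in[d]:\text{Event~I}(k)\text{ and Event~II}(k)\}$ and compute its first two moments over $S_1,\dots,S_d$ with $\mathcal E$ fixed. For a single $k$, conditioning on $\supp(S_k)=T$, Event~I says the $d-1$ other columns of $[d]$ each avoid $T$; this has probability exactly $q^{d-1}$ with $q=\binom{m-s}{s}/\binom ms=1-O(s^2/m)$, independent of $T$, so $\Pr[\text{I}(k)\wedge\text{II}(k)]=q^{d-1}p$ and $\E N=d\,q^{d-1}p$. Because $m=\Omega(s^2d)$ we have $q^{d-1}=\Omega(1)$, hence $\E N\gtrsim d\alpha s^2/m$; taking $\alpha$ near the top of its permitted range, $\alpha\eqsim m^2/(s^3d^2)$ as in Theorem~\ref{thm:count_sketch_not_infty}, this is $\gtrsim m/(sd)$, which is $\omega(1)$ (indeed $d^{\Omega(1)}$) in the regime of interest. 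For the pairwise terms with $k\ne l$, I would condition on $\supp(S_k),\supp(S_l)$: on the event these are disjoint, the remaining $d-2$ columns of $[d]$ avoid both with probability $(\binom{m-2s}{s}/\binom ms)^{d-2}$, and log-concavity of binomial coefficients ($\binom{m-2s}{s}\binom ms\le\binom{m-s}{s}^2$) bounds this by $(q^{d-1})^2(1+o(1))$; separately, the two Event-II conditions for disjoint $\supp(S_k),\supp(S_l)$ are essentially uncorrelated since the total incidence mass $\alpha s$ is $o(m)$, contributing a factor $p^2(1+o(1))$. Combining the diagonal and off-diagonal contributions yields $\E[N^2]\le\E N+(1+o(1))(\E N)^2$.

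Finally, Paley--Zygmund gives $\Pr[N\ge1]\ge(\E N)^2/\E[N^2]\ge\big((1+o(1))+1/\E N\big)^{-1}\ge0.999$ once $\E N\to\infty$; undoing the conditioning on the (probability-$\ge0.999$) spread-out configuration $\mathcal E$ shows that with probability at least $0.99$ some $k\in[d]$ satisfies both events, as claimed.

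The step I expect to be the main obstacle is the off-diagonal bound in the second moment, i.e.\ showing Events~I and~II for two distinct indices are at most mildly positively correlated. The I--I interaction is handled cleanly by the binomial log-concavity inequality; the fiddly part is verifying that conditioning on $\supp(S_k)\cap\supp(S_l)=\emptyset$ does not inflate the probability that each of $S_k,S_l$ has exactly one incidence with the extra columns --- this is precisely where the ``spread out'' property of $\mathcal E$ and the smallness of $\alpha s^2/m$ are used. A secondary point worth flagging is that the claim is only nonvacuous (and $\E N$ only grows) when $\alpha$ is taken near the top of its stated range, matching the choice $\alpha\eqsim m^2/(s^3d^2)$ used in Theorem~\ref{thm:count_sketch_not_infty}.
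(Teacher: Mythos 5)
Your proposal is correct in outline, but it proves the claim by a genuinely different route than the paper. The paper's proof is a conditioning-plus-counting argument: it first shows (via Markov) that $0.99d$ of the first $d$ columns are pairwise disjoint from all others in $[d]$ and that $0.99\alpha$ of the extra columns are mutually disjoint, then applies hypergeometric lower/upper tail bounds to the intersection of the union $N$ of the good first-$d$ supports with the union $M$ of the extra supports (and with the set $W$ of multiply-covered rows), and finally pigeonholes to find a block of $N$ meeting $M$ in exactly one row and missing $W$; such a block is the desired $k$. You instead run a second-moment/Paley--Zygmund argument on the count of good columns, built on the clean reformulation that Event~II for $\supp(S_k)=T$ is exactly $\sum_{i\in T}c_i=1$; the price is the off-diagonal correlation analysis (the log-concavity bound $\binom{m-2s}{s}\binom{m}{s}\le\binom{m-s}{s}^2$ for the I--I interaction and the near-independence of the two incidence counts for the II--II interaction), which the paper's route avoids entirely. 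The caveat you flag is real but is shared by the paper: its hypergeometric lower tail on $x=|N\cap M|$ is vacuous unless $\E[x]\gg\sqrt{|M|}$, which forces $\alpha\gtrsim m^2/(s^3d^2)$, so both arguments really need $\alpha\eqsim m^2/(s^3d^2)$ rather than merely $\alpha=O(m^2/(s^3d^2))$; likewise both need the resulting expected count ($\E N\gtrsim m/(sd)$ in your notation, $x\gtrsim s^2d\alpha/m$ in the paper's) to exceed a sufficiently large constant, which holds comfortably for the parameters $m=d^{1.5}$, $s=d^{0.25}$ used in Theorem~\ref{thm:count_sketch_not_infty}.
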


\begin{proof}
If $m = \Omega(s^2 d)$, then for any $i$ in $\{1, 2, ..., d\}$, let $X_i$ be an indicator that the entries of column i are disjoint from all $i'$ in $[d] \backslash \{i\}$. Then $\E[X_i] \geq .9999$, so by Markov's inequality, with probability .99, we have $.99d$ columns having this property (indeed, the expected value of $d-X$ is at most $.0001d$, so $\Pr[ d-X \geq .01 d ] \leq \frac{\E[d-X]} {.01d} \leq \frac{.0001d}{.01d} = .01 $). 
Define Event $E$ to be that $.99d$ columns of first $d$ columns have the property that the entries of that column are disjoint from all the other $d-1$ columns. Let $S$ be the set of these $.99d$ columns. Let $N$ be the union of supports of columns in $S$.


Each column $i$ in $\{d+1, ..., d+\alpha\}$ chooses $s$ non-zero entries. Define event $F$( which is similar as event $E$) to be that $.99\alpha$ columns of the next $\alpha$ columns have the property that the entries of that column are disjoint from all the other $\alpha-1$ columns. By the same argument, since $\alpha < d$, with probability .99, we have $.99\alpha$ columns in $\{d+1, ..., d+\alpha\}$ being disjoint from other columns in $\{d+1, ..., d+\alpha\}$. Condition on event $F$ holding. Let $L$ be the multiset union of supports of all columns in $\{d+1, ..., d+\alpha\}$. Then $L$ has size $\alpha \cdot s$. Let $M$ be the union of supports of all columns in $\{d+1, ..., d+\alpha\}$, that is, the set union rather than the multiset union. Note that $|M| \geq .99\alpha \cdot s$ because of $.99 \alpha$ columns are disjoint from each other.

The intersection size $x$ of $N$ and $M$ is hyper-geometrically distributed with expectation
\begin{equation*}
\E[x] = \frac{s|S| \cdot |M|}{ m}. 
\end{equation*} 
By a lower tail bound for the hypergeometric distribution \footnote{\url{https://en.wikipedia.org/wiki/Hypergeometric_distribution}} , 
\begin{equation*}
\Pr[x \leq (p-t)n] \leq \exp(-2t^2n),
\end{equation*} 
where $p = s\cdot |S|/m$ and $n = |M|$, so 
\begin{equation*}
\Pr[x \leq \E[x] - t \cdot |M|] \leq \exp(-2t^2\cdot |M|) \leq 0.01,
\end{equation*}
where the last inequality follows by setting  $t = \Theta(1/\sqrt{|M|})$. Thus, we get with probability $.99$, the intersection size is at least $\frac{s|S| \cdot |M|}{ m} - \Theta(\sqrt{|M|})$ .

Now let $W$ be the distinct elements in $L\backslash M$, so necessarily $|W| \leq .01 \alpha \cdot s$.  By an upper tail bound for the hypergeometric distribution, the intersection size $y$ of $W$ and $N$ satisfies
\begin{equation*}
\Pr[y \geq (p+t)n] \leq \exp(-2t^2n),
\end{equation*}
where $p = s\cdot |S|/m$ and $n = |W|$, we again get
\begin{equation*}
 \Pr[y \geq \E[y] + t\cdot |W|] \leq \exp(-2t^2\cdot|W|).
\end{equation*} 
If $|W| = 0$, then $y = 0$. Otherwise, we can set $t = \Theta(1/\sqrt{|W|})$ so that this probability is less than $.01$, and we get with probability $.99$, the intersection size  $y$ is at most $s\cdot |S|\cdot |W|/m + \Theta(\sqrt{|W|})$. Note that we have that $\Theta(\sqrt{|M|})$ and $\Theta(\sqrt{|W|})$ are bounded by $\Theta(\sqrt{s \cdot \alpha})$. Setting $\alpha = O( \frac{m^2}{ s^3 d^2} )$ suffices to ensure $y$ is at most $(1.01)s\cdot |S|\cdot |W|/m$, and earlier that $x$ is at least $.99\cdot s\cdot |S|\cdot |M|/m$.

The probability one of the $|S|$ blocks in $N$ has two or more intersections with $M$ is less than ${x \choose 2}$ times the probability two random distinct items in the intersection land in the block. This probability is 
\begin{equation*}
 \frac{ {x \choose 2}\cdot {s \choose 2}} { {s\cdot |S| \choose 2}} = \Theta(x^2/|S|^2) = \Theta(x^2/d^2) = \Theta(m^2/(d^4 s^2)).
\end{equation*} 
So the expected number of such blocks is $\Theta(m^2 sd/(d^4 s^2)) = \Theta(m^2 /(d^3 s))$ which is less than $(.99\cdot s\cdot |S|\cdot |M|)/(2m) \leq X/2$ if $m = o(d^2)$, which we have. So, there are at least $x/2$ blocks which have intersection size exactly 1 with $N$. Note that the number of intersections of the $|S|$ blocks with $W$ is at most $y$, which is at most $(1.01)s\cdot |S|\cdot |W|/m \leq (1.01)s\cdot |S|\cdot \frac{1}{99}\cdot |M|/m < x/2$, and therefore there exists a block, that is, a column among the first $d$ columns, which intersects $M$ in exactly one position and does not intersect $W$. This is our desired column. Thus, we complete the proof.
\end{proof}

\section{Leverage score sampling does not obey the $\ell_\infty$ guarantee}

Not only does Count-Sketch fail, but so does leverage score sampling,
which is a technique that takes a subsample of rows of $A$ with
rescaling.  In this section we show an $A$ and a $b$ such that
leverage score sampling will not satisfy the $\ell_{\infty}$ guarantee.  We
start with a formal definition of leverage scores. 

\begin{definition}[Leverage Scores]
Given an arbitrary $n\times d$ matrix $A$, with $n>d$, let $U$ denote the $n\times d$ matrix
consisting of the $d$ left singular vectors of $A$, let $U_{(i)}$ denote the $i-$th row of the 
matrix $U$, so $U_{(i)}$ is a row vector. Then the leverage scores of the rows of $A$ are given by $l_i = \| U_{(i)}\|_2^2$, 
for $i\in [n]$.
\end{definition}
The leverage score sampling matrix can be thought of as a square diagonal matrix $D\in \mathbb{R}^{n\times n}$ with diagonal entries chosen from some distribution. If $D_{ii}=0$, it means we do not choose the $i$-th row of matrix $A$., If $D_{ii}>0$, it means we choose that row of the matrix $A$ and also rescale that row. We show that the leverage score sampling matrix cannot achieve $\ell_{\infty}$ 
guarantee, nor can it achieve our notion of generalization error.

\define{Theorem}{thm:leverage_score_not_infty}{
Let $D \in \R^{n \times n}$ be a leverage score sampling matrix with
  $m$ nonzeros on the diagonal. There exists a matrix $A \in \R^{n \times
    d}$ and a vector $b \in \R^n$ such that, if $m \lesssim d\sqrt{d}$, then the ``true''
  solution $x^* = A^\dagger b$ and the approximation $x' = (DA)^\dagger Db$ have
  large $\ell_\infty$ distance with constant probability:
  \[
  \norm{x' - x^*}_\infty \gtrsim \frac{1}{\sqrt{d}} \norm{b}_2.
  \]
    Therefore there exists a constant $c$ for which this matrix
  does not satisfy the generalization guarantee~\eqref{eq:ellinf} with
  $1 - \frac{c}{d}$ probability.
}
\state{thm:leverage_score_not_infty}

\begin{proof}
We choose the matrix $A$ to be the identity on its top $d$ rows, and $L$ scaled identity matrices $\frac{1}{ \sqrt{\alpha d}} I_d$ for the next $dL$ rows, where $L$ satisfies $\frac{1}{d} + \frac{1}{\alpha d} L =1$ (to normalize each column of $A$), which implies $L= \alpha(d-1)$.
Choose some $\beta \in [1,d)$. Set the value of the first $d$ coordinates of vector $b$ to be $\frac{1}{\sqrt{d}}$ and set the value to be $\frac{1}{\sqrt{\beta} }$ for the next $\beta$ coordinates, with the remaining entries all zero. Note that $\| b \|_2 = \sqrt{2}$.

First, we compute $\|Ax-b\|_2^2$. Because $\beta$ is less than $d$, there are two kinds of $x_j$: one involves the following term,
\begin{equation}\label{eq:nonspecial_j_without_leverage}
(\frac{1}{\sqrt{d}} x_j - \frac{1}{\sqrt{d}} )^2  + (L-1)  ( \frac{1}{\sqrt{\alpha d} }  x_j )^2,
\end{equation}
where the optimal $x_j$ should be set to $1/d$. The other involves the term:
\begin{equation}\label{eq:special_j_without_leverage}
(\frac{1}{\sqrt{d}} x_j - \frac{1}{\sqrt{d} } )^2 + (\frac{1}{\sqrt{\alpha d}} x_j - \frac{1}{ \sqrt{\beta}})^2 + (L-1) ( \frac{1}{\sqrt{\alpha d} }  x_j )^2,
\end{equation}
where the optimal $x_j$ should be set to $1/d +  1/\sqrt{\alpha \beta d} $. Because we are able to choose $\alpha, \beta$ such that $\alpha \beta \gtrsim d$, then
\begin{equation*}
x_j = 1/d +  1/\sqrt{\alpha \beta d} \lesssim 1/d.
\end{equation*} 


Second, we compute $\| DAx - Db \|_2^2$. With high probability, there exists a $j$ satisfying Equation (\ref{eq:special_j_without_leverage}), but after applying leverage score sampling, the middle term of Equation (\ref{eq:special_j_without_leverage}) is removed. Let $p_1 = \frac{1}{d}$ denote the leverage score of each of the top $d$ rows of $A$, and let $p_2=\frac{1}{\alpha d}$ denote the leverage score of each of the next $Ld$ rows of $A$. We need to discuss the cases $m>d$ and $m\leq d$ separately.

If $m > d$, then the following term involves $x_j$, {
\begin{eqnarray*}
&& ( \frac{1}{ \sqrt{p_1}} \frac{1}{\sqrt{d}} x_j - \frac{1}{ \sqrt{p_1}} \frac{1}{\sqrt{d}} )^2 + \frac{m-d}{d} \cdot  ( \frac{1}{\sqrt{p_2}} \frac{1}{\sqrt{\alpha d} }x_j )^2 \\ 
& = &\frac{1}{p_1} ( \frac{1}{\sqrt{d}} x_j - \frac{1}{\sqrt{d}} )^2 + \frac{m-d}{d} \cdot \frac{1}{p_2} (\frac{1}{\sqrt{\alpha d} }x_j )^2 \\ 
& = & d \left( ( \frac{1}{\sqrt{d}} x_j - \frac{1}{\sqrt{d}} )^2 + \frac{m-d}{d} \alpha (\frac{1}{\sqrt{\alpha d} }x_j )^2  \right).
\end{eqnarray*}
}
where the optimal $x_j$ should be set to{ 
\begin{align*}
x_j = & ~ \frac{ 1/d }{ 1/d + (m-d)\alpha/ (\alpha d^2) }  \\
=  & ~ \frac{ 1 }{ 1+ (m-d)/d } \\
\gtrsim & ~ \frac{1}{ (m-d) /d } \\
 \gg & ~  \frac{1}{\sqrt{d}}. & \text{~by~ } m  \ll d \sqrt{d}
\end{align*}}
If $m \leq d$, then the term involving $x_j$ is
$
( \frac{1}{ \sqrt{p_1}} \frac{1}{\sqrt{d}} x_j - \frac{1}{ \sqrt{p_1}} \frac{1}{\sqrt{d}} )^2   
$
where the optimal $x_j$ should be set to be $1 \gg 1/\sqrt{d}$.

Third, we need to compute $\| Ax^* -b \|_2^2$ and $\sigma_{\min}(A)$. It is easy to see that $\sigma_{\min}(A)$ because $A$ is an orthonormal matrix. The upper bound for $\| Ax^* -b \|_2^2 =2$, and the lower bound is also a constant, which can be proved in the following way: 
\begin{align*}
\| A x^* - b\|_2^2 & =  \sum_{j=1}^\beta (\ref{eq:nonspecial_j_without_leverage}) +\sum_{j=\beta+1}^d (\ref{eq:special_j_without_leverage}) \geq d (\frac{1}{\sqrt{d}} \frac{1}{d} - \frac{1}{\sqrt{d}})^2 \gtrsim  d \cdot \frac{1}{d} = 1.
\end{align*}
\end{proof}

\section{Bounding $\E[ \|Z\|_F^2 ]$}\label{sec:Z}

Before getting into the proof details, we define the key property of $S$ being used in the rest of the proofs.
\begin{definition}[All Inner Product Small(AIPS) Property]
For any matrix $S\in \mathbb{R}^{r\times n}$, if for all $i,j \in [n]$ with $i\neq j$ we have
\begin{equation*}
| \langle S_i, S_j \rangle | = O(\sqrt{\log n} /\sqrt{r}),
\end{equation*}
we say that $S$ satisfies the ``$\mathrm{AIPS}$'' property.
\end{definition}
\begin{claim}
  If $S\in \mathbb{R}^{r\times n}$ is a subsampled Hadamard transform
  matrix, then the $\mathrm{AIPS}$ property holds with probability at
  least $1-1/\poly(n)$.
\end{claim}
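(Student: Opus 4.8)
The plan is to compute $\inner{S_i,S_j}$ in closed form, reduce the $\mathrm{AIPS}$ bound to a one‑dimensional concentration estimate for uniform subsampling, and then union bound over the $\binom n2$ pairs.

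First I would unpack the definition of the SRHT. In the normalization that makes each column of $S$ a unit vector (the one used in the proof of Lemma~\ref{l:bound_T_k}), we may write $S=\tfrac1{\sqrt r}PH_nD$ with $H_n\in\{\pm1\}^{n\times n}$ the Hadamard matrix, $D$ the diagonal sign matrix, and $P$ the $r\times n$ coordinate‑selection matrix; then the $i$th column of $S$ is $S_i=\tfrac{D_{ii}}{\sqrt r}Ph_i$, where $h_i\in\{\pm1\}^n$ is the $i$th column of $H_n$. Since every entry of $h_i$ is $\pm1$, $\|Ph_i\|_2^2=r$, so $\inner{S_i,S_i}=1$, consistent with that proof. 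For $i\neq j$,
\[
\inner{S_i,S_j}=\frac{D_{ii}D_{jj}}{r}\inner{Ph_i,Ph_j}=\frac{D_{ii}D_{jj}}{r}\sum_{t\in T}v_t,\qquad v_t:=(h_i)_t(h_j)_t\in\{\pm1\},
\]
where $T\subseteq[n]$ with $|T|=r$ is the (uniformly random) set of sampled coordinates. The sign prefactor $D_{ii}D_{jj}$ has modulus $1$, so the randomness in $D$ is irrelevant for this claim and it suffices to control $\bigl|\sum_{t\in T}v_t\bigr|$.

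The structural fact that makes this work is that distinct columns of a Hadamard matrix are orthogonal: $\sum_{t=1}^n v_t=\inner{h_i,h_j}=0$ for $i\neq j$. Hence $\E\bigl[\sum_{t\in T}v_t\bigr]=\tfrac rn\sum_{t=1}^n v_t=0$, whether $P$ samples with or without replacement, and $\sum_{t\in T}v_t$ is a centered sum of $r$ bounded terms drawn from the population $\{v_t\}_{t\in[n]}\subseteq[-1,1]$. I would then invoke a standard Chernoff–Hoeffding bound — for sampling without replacement this is Hoeffding's inequality for finite populations (equivalently Serfling's bound), which gives a tail no worse than in the i.i.d.\ case — to conclude $\Pr\bigl[\,\bigl|\sum_{t\in T}v_t\bigr|\ge\lambda\,\bigr]\le 2\exp\!\bigl(-\lambda^2/(2r)\bigr)$ for every $\lambda>0$. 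Taking $\lambda=\Theta(\sqrt{r\log n})$ with a sufficiently large constant makes this at most $n^{-(C+2)}$ for any prescribed constant $C$.

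Finally, a union bound over the fewer than $n^2$ unordered pairs $\{i,j\}$ with $i\neq j$ shows that with probability at least $1-2n^{-C}=1-1/\poly(n)$, simultaneously $|\inner{S_i,S_j}|\le\lambda/r=O(\sqrt{\log n}/\sqrt r)$ for all $i\neq j$, i.e. $S$ has the $\mathrm{AIPS}$ property; enlarging the constant hidden in the $O(\cdot)$ drives the failure probability below any fixed inverse polynomial in $n$. There is no serious obstacle in this argument; the only points that need a little care are the concentration inequality for sampling \emph{without} replacement (handled by Hoeffding's finite‑population bound) and fixing the normalization so that $\inner{S_i,S_i}=1$, both of which are standard.
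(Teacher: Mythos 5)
Your proposal is correct and follows the same route as the paper: establish the pairwise bound $|\inner{S_i,S_j}| = O(\sqrt{\log n}/\sqrt{r})$ for each $i \neq j$ and union bound over the $O(n^2)$ pairs. The only difference is that the paper simply asserts the pairwise bound ``from the structure of $S$'' (deferring to a reference), whereas you actually derive it via the orthogonality of distinct Hadamard columns and Hoeffding's bound for sampling without replacement — a welcome filling-in of the details, including the normalization issue that makes $\inner{S_i,S_i}=1$.
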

\begin{proof}
From the structure of $S$, for any $i\neq j$, we have with probability $1-1/\poly(n)$ such that $ |\langle S_i, S_j \rangle| = O(\sqrt{\log n} /\sqrt{r})$.  Applying a union bound over $O(n^2)$ pairs, we obtain that 
\begin{align*}
\Pr[ \text{~AIPS~holds~}] \geq 1-1/\poly(n).
\end{align*}
\end{proof}

The main idea for bounding $\E [\| Z\|_F^2]$ is to rewrite it as 
$
 \E[ \| Z \|_F^2] = \E [~ \|Z \|_F^2 ~|~ \text{AIPS~holds}~] + \E [~ \|Z \|_F^2 ~|~ \text{AIPS~does~not~hold}~].
$
Because $ \Pr[ \text{~AIPS~does~not~hold}]$ is at most $1/\poly(n)$,
the first term dominates the second term, which means we
only need to pay attention to the first term. We repeatedly apply this
idea until all the $S$ are removed.

We start by boundinbg $\E[\| Z\|_F^2]$ by squaring $Z_{i_0, j_0}$ and using that $\E[\sigma_i \sigma_j] = 1$ if $i = j$ and $0$ otherwise. Then, we obtain,  
{%
\begin{eqnarray}\label{eqn:Zone}
\underset{\sigma}{\E} [ Z_{i_0,j_k}^2 ] =  a_{i_0}^2 b_{j_k}^2 \sum_{i_1,\cdots i_k, j_0, \cdots j_{k-1}} \prod_{c=0}^k \langle S_{i_c}, S_{j_c}\rangle^2 \prod_{c=1}^k (UU^\top)_{i_{c-1},j_c}^2.
\end{eqnarray}
}
We thus have,  {%
\begin{align*}
  \sum_{i_0,j_k, j_k\neq i_k} a_{i_0}^2 \langle S_{i_0}, S_{j_0}\rangle^2 b_{j_k}^2 \langle S_{i_k}, S_{j_k} \rangle^2  =  a_{j_0}^2 \|b\|_2^2 O( (\log n)/r )  + \| a\|_2^2 \|b\|_2^2 O( (\log^2 n)/ r^2 ) \eqdef  C_{j_0},
\end{align*}}
where the first equality is from our conditioning, and the second equality is the definition of $C_{j_0}$. 
Hence, $\underset{S}{\E}  [\| Z\|_F^2 ]$ is
{%
\begin{eqnarray*}
& = & \sum_{i_1,\cdots i_k, j_0, \cdots j_{k-1}} \prod_{c=1}^{k-1} \langle S_{i_c}, S_{j_c}\rangle^2 \prod_{c=1}^k (UU^\top)_{j_{c-1},i_c}^2 \cdot \sum_{i_0,j_k, j_k\neq i_k} a_{i_0}^2 \langle S_{i_0}, S_{j_0}\rangle^2 b_{j_k}^2 \langle S_{i_k}, S_{j_k} \rangle^2 \\
& = & \sum_{i_1,\cdots i_k, j_0, \cdots j_{k-1}} \prod_{c=1}^{k-1} \langle S_{i_c}, S_{j_c}\rangle^2 \prod_{c=1}^k (UU^\top)_{j_{c-1},i_c}^2 C_{j_0} \\
& = & \sum_{i_1,\cdots i_k, j_0, \cdots j_{k-1}} \langle S_{i_{k-1}}, S_{j_{k-1}} \rangle^2 (UU^\top)_{j_{k-1},i_k}^2 \cdot \prod_{c=1}^{k-2} \langle S_{i_c}, S_{j_c}\rangle^2 \prod_{c=1}^{k-1} (UU^\top)_{j_{c-1},i_c}^2 C_{j_0}, \\
\end{eqnarray*}
}
where the first equality follows from (\ref{eqn:Zone}), the second equality by definition
of $C_{j_0}$, and the final equality by factoring out $c = k-1$ from one product and
$c = k-2$ from the other product. 

The way to bound the term $\langle S_{i_{k-1}}, S_{j_{k-1}} \rangle$ is by separating the diagonal term where $i_{k-1} = j_{k-1}$ and the non-diagonal term where $i_{k-1} \neq j_{k-1}$. We now use the aforementioned property of $S$, namely, that $\langle S_{i_{k-1}}, S_{j_{k-1}} \rangle = 1$, if $i_{k-1} = j_{k-1}$, while for $i_{k-1}\neq j_{k-1}$, we have with probability $1 - 1/\mathrm{poly}(n)$ that $|\langle S_{i_{k-1}}, S_{j_{k-1}} \rangle| = O(\sqrt{\log n}/\sqrt{r})$ conditioned on $\mathrm{AIPS}$ holding.  

Conditioned on $\mathrm{AIPS}$ holding, we can recursively reduce the number of terms in the product:
{%
\begin{eqnarray*}
&& \| Z\|_F^2 \\
& = & \sum_{ \substack{i_1,\cdots i_k, j_0, \cdots j_{k-1}, i_{k-1} \neq j_{k-1} }} O((\log n)/r) \cdot (UU^\top)_{j_{k-1},i_k}^2   \cdot \prod_{c=1}^{k-2} \langle S_{i_c}, S_{j_c}\rangle^2 \prod_{c=1}^{k-1} (UU^\top)_{j_{c-1},i_c}^2 C_{j_0} \\
& + & \sum_{i_1,\cdots i_k, j_0, \cdots j_{k-1}, i_{k-1} = j_{k-1}} 1  \cdot (UU^\top)_{j_{k-1},i_k}^2  \cdot  \prod_{c=1}^{k-2} \langle S_{i_c}, S_{j_c}\rangle^2 \prod_{c=1}^{k-1} (UU^\top)_{j_{c-1},i_c}^2 C_{j_0} \\
& \leq & \sum_{i_1,\cdots i_k, j_0, \cdots j_{k-1} } O( (\log n) / r)  \cdot (UU^\top)_{j_{k-1},i_k}^2  \cdot  \prod_{c=1}^{k-2} \langle S_{i_c}, S_{j_c}\rangle^2 \prod_{c=1}^{k-1} (UU^\top)_{j_{c-1},i_c}^2 C_{j_0} \\
& + & \sum_{i_1,\cdots i_k, j_0, \cdots j_{k-1}, i_{k-1} = j_{k-1}} 1 \cdot (UU^\top)_{j_{k-1},i_k}^2   \cdot \prod_{c=1}^{k-2} \langle S_{i_c}, S_{j_c}\rangle^2 \prod_{c=1}^{k-1} (UU^\top)_{j_{c-1},i_c}^2 C_{j_0},
\end{eqnarray*}
}
where the first equality follows from the property just mentioned, and the inequality
follows by including back the tuples of indices for which $i_{k-1} = j_{k-1}$, using
that each summand is non-negative. 

Our next step will be to bound the term $(UU^\top)^2_{j_{k-1}, i_k}$. We have, $\| Z\|_F^2$ is
{
\begin{align*}
 & \leq \sum_{i_k,j_{k-1}} (UU^\top)_{i_k,j_{k-1}}^2  \sum_{\substack{i_1,\cdots, i_{k-1} \\ j_0, \cdots, j_{k-2} } } O( (\log n) / r)  \\
&\cdot \prod_{c=1}^{k-2} \langle S_{i_c}, S_{j_c}\rangle^2 \prod_{c=1}^{k-1} (UU^\top)_{j_{c-1},i_c}^2 C_{j_0} \\
& + \sum_{\substack{ i_1,\cdots, i_k \\ j_0, \cdots, j_{k-1}\\ i_{k-1} = j_{k-1} } } 1  \cdot (UU^\top)_{j_{k-1},i_k}^2  \prod_{c=1}^{k-2} \langle S_{i_c}, S_{j_c}\rangle^2 \prod_{c=1}^{k-1} (UU^\top)_{j_{c-1},i_c}^2 C_{j_0} \\
& = \underbrace{ O(d(\log n)/r) \sum_{ \substack{ i_1,\cdots, i_{k-1}\\ j_0, \cdots, j_{k-2} } } \prod_{c=1}^{k-2} \langle S_{i_c}, S_{j_c}\rangle^2 \prod_{c=1}^{k-1} (UU^\top)_{j_{c-1},i_c}^2 C_{j_0} }_{A}  \\
& + \underbrace{ \sum_{ \substack{ i_1,\cdots, i_k \\j_0, \cdots, j_{k-1}\\ i_{k-1} = j_{k-1}}} 1  \cdot (UU^\top)_{j_{k-1},i_k}^2  \prod_{c=1}^{k-2} \langle S_{i_c}, S_{j_c}\rangle^2 \prod_{c=1}^{k-1} (UU^\top)_{j_{c-1},i_c}^2 C_{j_0} }_{B},\\
\end{align*}}
where the equality uses that $\sum_{i_k, j_{k-1}} (UU^\top)_{i_k, j_{k-1}}^2 = \|UU^\top\|_F^2 = d$.  
We first upper bound term $B$:{%
\begin{eqnarray*}
& = & \sum_{ \substack{ i_1,\cdots, i_k\\ j_0, \cdots, j_{k-1}\\ i_{k-1} = j_{k-1}} } 1  \cdot (UU^\top)_{j_{k-1},i_k}^2  \prod_{c=1}^{k-2} \langle S_{i_c}, S_{j_c}\rangle^2 \prod_{c=1}^{k-1} (UU^\top)_{j_{c-1},i_c}^2 C_{j_0}\\
 & = & \sum_{ \substack{ i_1,\cdots, i_{k-1}\\ j_0, j_1, \cdots ,j_{k-1}\\ i_{k-1} = j_{k-1} } } C_{j_0} \prod_{c=1}^{k-2} \langle S_{i_c}, S_{j_c}\rangle^2 \prod_{c=1}^{k-1} (UU^\top)_{j_{c-1},i_c}^2 \sum_{i_k} (UU^\top)_{j_{k-1},i_k}^2\\
& = & \sum_{ \substack{ i_1,\cdots, i_{k-1} \\ j_0, j_1, \cdots ,j_{k-1}\\ i_{k-1} = j_{k-1} } } C_{j_0} \prod_{c=1}^{k-2} \langle S_{i_c}, S_{j_c}\rangle^2 \prod_{c=1}^{k-1} (UU^\top)_{j_{c-1},i_c}^2  |e_{j_{k-1} } UU^\top|^2\\
& \leq & \sum_{ \substack{ i_1,\cdots, i_{k-1}\\ j_0, j_1, \cdots ,j_{k-1}\\ i_{k-1} = j_{k-1} } } C_{j_0} \prod_{c=1}^{k-2} \langle S_{i_c}, S_{j_c}\rangle^2 \prod_{c=1}^{k-1} (UU^\top)_{j_{c-1},i_c}^2  1\\
& = & \sum_{ \substack{ i_1,\cdots, i_{k-1}\\ j_0, j_1, \cdots ,j_{k-2} } } C_{j_0} \prod_{c=1}^{k-2} \langle S_{i_c}, S_{j_c}\rangle^2 \prod_{c=1}^{k-1} (UU^\top)_{j_{c-1},i_c}^2,  \\
\end{eqnarray*}
}
where the first equality is the definition of $B$, the second 
equality follows by separating out the index $i_k$, the third equality 
uses that $\sum_{i_k} (UU^\top)^2_{j_{k-1}, i_k} = \|e_{j_{k-1}} UU^\top\|_2^2$, that is, the squared
norm of the $j_{k-1}$-th row of $UU^\top$,
the inequality follows since all rows of a projection matrix $UU^\top$ have norm at most $1$,
and the final equality uses that $j_{k-1}$ no longer appears in the expression. 

We now merge our bounds for the terms $A$ and $B$ in the following way:
{%
\begin{eqnarray*}
&& \| Z\|_F^2 \\
&\leq & A+B \\
&\leq & O(d(\log n)/r) \sum_{ \substack{ i_1,\cdots,i_{k-1} \\ j_0,\cdots,j_{k-2} } } \prod_{c=1}^{k-2} \langle S_{i_c}, S_{j_c}\rangle^2 \prod_{c=1}^{k-1} (UU^\top)_{j_{c-1},i_c}^2 C_{j_0} \\
& + & \sum_{ \substack{ i_1,\cdots,i_{k-1}\\ j_0, j_1,\cdots,j_{k-2}} } C_{j_0} \prod_{c=1}^{k-2} \langle S_{i_c}, S_{j_c}\rangle^2 \prod_{c=1}^{k-1} (UU^\top)_{j_{c-1},i_c}^2 \\
& = & \left(  O(d(\log n)/r) + 1 \right) \sum_{\substack{ i_1,\cdots, i_{k-1}\\ j_0, \cdots, j_{k-2}} } \prod_{c=1}^{k-2} \langle S_{i_c}, S_{j_c}\rangle^2 \prod_{c=1}^{k-1} (UU^\top)_{j_{c-1},i_c}^2 C_{j_0}\\
& \leq & \cdots \\
& \leq &  \left(  O(d(\log n)/r) + 1 \right)^2 \sum_{ \substack{i_1,\cdots, i_{k-2} \\ j_0, \cdots,j_{k-3} }} \prod_{c=1}^{k-3} \langle S_{i_c}, S_{j_c}\rangle^2 \prod_{c=1}^{k-2} (UU^\top)_{j_{c-1},i_c}^2 C_{j_0} \\ 
& \leq & \cdots \\
& \leq & \left(  O(d(\log n)/r) + 1 \right)^{k-1} \sum_{i_1, j_0 } \prod_{c=1}^{1} (UU^\top)_{j_{c-1},i_c}^2 C_{j_0}\\
& \leq &  \left(  O(d(\log n)/r) + 1 \right)^{k-1} (d\|b\|_2^2 (\log^2 n)/r^2 + \| b\|_2^2 (\log n)/r),
\end{eqnarray*}
}

where the first two inequalities and first equality 
are by definition of $A$ and $B$ above. The first inequality follows by induction, 
since at this point we have replaced $k$ with $k-1$, and can repeat the argument,
incurring another multiplicative factor of $O(d (\log n)/r) + 1$. Repeating the
induction in this way we arrive at the last inequality. Finally, the last inequality follows by plugging in the definition of $C_{j_0}$, using
that $\sum_{i_1, j_0} (UU^\top)^2_{j_0, i_1} = d$, and { 
$$\sum_{j_0, i_1} (UU^\top)^2_{j_0, i_1} a_{j_0}^2 
= \sum_{j_0} a_{j_0}^2 \sum_{i_1} (UU^\top)^2_{j_0, i_1}
= \sum_{j_0} a_{j_0}^2 \|e_{j_0}UU^\top\|_2^2
\leq 1,$$}
where the inequality follows since each row of $UU^\top$ has norm at most
$1$, and $a$ is a unit vector.  The final result is that { 
\[
 \| Z\|_F^2  \leq  \left(  O(d(\log n)/r) + 1 \right)^{k-1} (d\|b\|_2^2 (\log^2 n)/r^2 + \| b\|_2^2 (\log n)/r).
\]}




\end{document}